\documentclass[lettersize,journal]{IEEEtran}
\usepackage{amsmath,amsfonts}

\usepackage{array}
\usepackage[caption=false,font=normalsize,labelfont=sf,textfont=sf]{subfig}
\usepackage{textcomp}
\usepackage{stfloats}
\usepackage{url}
\usepackage{verbatim}
\usepackage{graphicx}
\def\BibTeX{{\rm B\kern-.05em{\sc i\kern-.025em b}\kern-.08em
    T\kern-.1667em\lower.7ex\hbox{E}\kern-.125emX}}
\usepackage{balance}

\usepackage{mathtools}

\usepackage{acro}
\usepackage{siunitx}

\usepackage{amsthm}

\usepackage{makecell}

\usepackage{enumitem}

\def\fbinary{\mathbb{F}_2}
\newcommand{\freal}{\mathbb{R}}

\newcommand{\const}{\Gamma}
\newcommand{\seg}{\mathcal{S}}
\newcommand{\guessingIdx}{g}
\newcommand{\nc}{q}
\newcommand{\gaussnoise}{\xi}
\newcommand{\pnoise}{P_{\text{noise}}}
\newcommand{\plist}{P_\mathcal{L}}
\newcommand{\bima}[1]{\langle #1 \rangle}

\newcommand{\segi}{{(j)}}
\newcommand{\segis}[1]{{(#1)}}

\newcommand{\lEsNO}{E_\textnormal{s} / N_0}
\newcommand{\lEbNO}{E_\textnormal{b} / N_0}

\newtheorem{thm}{Theorem}

\usepackage{dsfont}
\newcommand{\ind}{\mathds{1}}

\usepackage[ruled,vlined, linesnumbered]{algorithm2e}

\SetKwFor{For}{for}{}{}
\SetKwFor{ForEach}{for each}{}{}
\SetKwFor{While}{while}{}{}
\SetKwIF{If}{ElseIf}{Else}{if}{}{else if}{else}{}

\DeclareAcronym{SNR}{
    short = SNR,
    long = signal-to-noise ratio,
}

\DeclareAcronym{GRAND}{
    short = GRAND,
    long = guessing random additive noise decoding,
}

\DeclareAcronym{ML}{
    short = ML,
    long = maximum likelihood,
}

\DeclareAcronym{ORBGRAND}{
    short = ORBGRAND,
    long = ordered reliability bits GRAND,
}

\DeclareAcronym{LLR}{
    short = LLR,
    long = log likelihood ratio,
}

\DeclareAcronym{TPC}{
    short = TPC,
    long = turbo product code,
}

\DeclareAcronym{LDPC}{
    short = LDPC,
    long = low-density parity-check
}

\DeclareAcronym{SISO}{
    short = SISO,
    long = soft-input soft-output
}

\DeclareAcronym{SOGRAND}{
    short = SOGRAND,
    long = soft-output GRAND,
}

\DeclareAcronym{SO}{
    short = SO,
    long = soft-output,
}

\DeclareAcronym{HW}{
    short = HW,
    long = Hamming weight,
}

\DeclareAcronym{LW}{
    short = LW,
    long = logistic weight,
}

\DeclareAcronym{BLER}{
    short = BLER,
    long = block error rate,
}

\DeclareAcronym{GR}{
    short = GR,
    long = guessing ratio
}

\DeclareAcronym{dRM}{
    short = dRM,
    long = dynamic Reed Muller
}

\DeclareAcronym{MS}{
    short = MS,
    long = min-sum
}

\DeclareAcronym{BP}{
    short = BP,
    long = belief propagation
}

\DeclareAcronym{GSegGRAND}{
    short = GSegGRAND,
    long = Generalized Segmented GRAND
}

\begin{document}
\title{Generalized Segmented GRAND for Guesswork Reduction in Turbo Product Decoding}
\author{Lukas Rapp,~\IEEEmembership{Graduate Student Member,~IEEE}, Jiewei~Feng, Muriel Médard,~\IEEEmembership{Fellow,~IEEE}, and Ken R. Duffy,~\IEEEmembership{Senior~Member,~IEEE}%
\thanks{This paper was presented in part at 2025 ISIT~\cite{rappBalancedTreeTransformation2025a}.}%
\thanks{L. Rapp and M. Médard are with the Research Laboratory for Electronics, Massachusetts Institute of Technology, Cambridge, MA 02139 USA (e-mail: rappl@mit.edu; medard@mit.edu).}%
\thanks{J. Feng and K. R. Duffy are with the Dept. of Mathematics and Dept. of ECE, Northeastern University, Boston, MA 02115 USA (e-mail: feng.ji@northeastern.edu; k.duffy@northeastern.edu).}%
\thanks{This work was supported by the Defense Advanced Research Projects Agency (DARPA) under Grant HR00112120008.}
}

\maketitle

\begin{abstract}
\Ac{GRAND} can efficiently decode any moderately redundant code with near \ac{ML} performance via noise effect guessing. 
For binary linear codes, Rowshan and Yuan's Segmented GRAND was the first to show that constrained guessing can reduce guesswork. Although powerful, their approach requires a specific parity-check matrix structure that limits the number of constraints that can be exploited as well as the class of applicable codes. Here we introduce GSegGRAND, a generalization of Segmented GRAND that circumvents its limitations. Built on a novel parity check structure and a transformation that maps codes into this structure, 
GSegGRAND efficiently incorporates up to \(\log_2(n)\) constraints for a wide range of codes, reducing guesswork by an additional \(75 \%\) over Segmented GRAND. To leverage that advantage for soft-output decoding, we derive an accurate \ac{SO} equation for GSegGRAND by extending \ac{SOGRAND} to incorporate constrained guessing. Applying this \ac{SO} to turbo product decoding, GSegGRAND achieves up to \(88 \%\) guesswork reduction, making it a promising candidate for low-latency decoding in future communication systems.
\end{abstract}

\begin{IEEEkeywords}
GRAND, Segmented GRAND, constrained guessing, turbo product decoding, SOGRAND
\end{IEEEkeywords}

\acresetall

\section{Introduction}
\IEEEPARstart{G}{uessing} random additive noise decoding (GRAND)~\cite{duffyCapacityAchievingGuessingRandom2019} \acuse{GRAND}
is a decoding framework capable of decoding any code of moderate redundancy, including non-linear codes~\cite{cohenAESErrorCorrection2023}, while achieving near \ac{ML} performance. A core component is a noise effect generator that produces noise effects in decreasing order of likelihood. For each query, \ac{GRAND} inverts the noise effect from the hard-decision channel output and returns the resulting vector as soon as it is a valid codeword. This principle can be applied to various channel models, such as hard-decision~\cite{duffyCapacityAchievingGuessingRandom2019}, soft-decision~\cite{duffyOrderedReliabilityBits2022}, or correlated channels~\cite{anKeepBurstsDitch2022, duffyUsingChannelCorrelation2023a}. \ac{GRAND}'s practicality has been demonstrated through multiple taped out chips~\cite{riazMultiCodeMultiRateUniversal2021a, blancGRANDABDecoder8482024,riazSub08pJBitUniversal2025,kizilatesLowLatencyModulationCorrelationAdaptive2025,Kizilates26SOGRAND} and hardware syntheses, e.g. ~\cite{condoHighperformanceLowcomplexityError2021, condoFixedLatencyORBGRAND2022, abbasHighThroughputEnergyEfficientVLSI2022, chu2023efficient,ji2025efficient,abbas2025improved}.

Recently, there has been increased interest in the reduction of \ac{GRAND}'s \emph{guesswork}, i.e., the number of noise effect queries until a codeword is found or decoding is abandoned. Existing work either targets special codes and applications~\cite{liSCLGRANDLowerComplexity2023a, zhouHybridGRANDSphere2023} or is applicable to general linear codes~\cite{rowshanConstrainedErrorPattern2022a,rowshanLowComplexityGRANDSegmentation2023,chuEfficientHardDetectionGRAND2023, chatzigeorgiouSymbolLevelGRANDHighOrder2023, rezaOrdeptJournal2025,wangPartiallyConstrainedGRAND2024, abbasImprovedStepGRANDLowLatency2025, 
rowshanJournalNew2025, rappSOGRANDAssistedGuesswork2025}.
One strategy is to exploit the syndrome of each query during decoding, enabling \cite{rezaOrdeptJournal2025} to correct an additional bit error and \cite{chuEfficientHardDetectionGRAND2023} to terminate decoding early. Another promising direction is \emph{constrained guessing}.
For a binary linear code with parity check matrix \(H\), \ac{GRAND} searches for a noise effect \(z^n\) that satisfies a set of parity constraints \(H z^n = s^{n-k}\), induced by the syndrome \(s^{n-k}\) of the received hard-decision. Guesswork can therefore be reduced by restricting the guessing to noise effects that already fulfill a subset of such constraints, where each constraint reduces guesswork by up to a factor of \(2\)~\cite{rowshanJournalNew2025}. This creates the challenge of designing noise effect generators that can incorporate constraints efficiently while preserving an accurate guessing order.

Rowshan and Yuan's proposed Segmented GRAND~\cite{rowshanConstrainedErrorPattern2022a,rowshanLowComplexityGRANDSegmentation2023,rowshanJournalNew2025}, which generates constrained queries efficiently using multiple noise effect generators, each operating on distinct segments of the codeword bits. That approach requires that the parity check matrix be in a class that can be transformed into a special structure where there exists a subset of rows such that each row has 1s in distinct, non-overlapping columns. If the parity check matrix can be transformed into that form, for each of these rows, the columns containing 1s then define an individual segment. In this case, segment parities correspond to syndrome values, allowing each noise effect generator to skip the generation and testing of noise effects of incorrect parity. While Segmented GRAND reduces guesswork efficiently, its guesswork reduction is limited by the number of rows that can be transformed into the specific structure.
The transformation becomes more challenging as more constraints are extracted, and an efficient method for extracting numerous constraints for any given code is currently lacking.

In this paper, we propose \ac{GSegGRAND}, which operates on a new parity check structure into which a wide range of parity check matrices can be easily transformed.
This structure and transformation, which we term \emph{balanced tree structure} and \emph{balanced tree transformation}, preserves the codebook and facilitates the extraction of additional constraints.
For all practical codes that we tested, we consistently extracted \(4\) constraints enabling a guesswork reduction by a factor of \(16\) over \ac{ORBGRAND} without constrained guessing. To achieve this, we propose a novel noise effect generator tailored to this generalized structure. Unlike Segmented GRAND, whose segment parities are fixed, \ac{GSegGRAND} works with segment parities that form a linear binary subspace induced by the balanced tree structure. This paper demonstrates how \ac{GSegGRAND} can efficiently iterate through noise effects within this subspace.

An important recent development of \ac{GRAND} is \ac{SOGRAND}~\cite{yuanSoftoutputGRANDLong2023}, which produces accurate \ac{SO} and facilitates iterative decoding of \ac{TPC}. \Acp{TPC} enable the construction of long and powerful codes that, when decoded with \ac{SOGRAND}, can outperform \ac{LDPC} codes used in the 5G New Radio~\cite{yuanSoftoutputGRANDLong2023}. 

\ac{GRAND} variants with constrained guessing, such as Segmented GRAND or \ac{GSegGRAND}, have the potential to reduce guesswork during \ac{TPC} decoding, an application that remains largely unexplored. However, constrained guessing requires a more nuanced \ac{SO} calculation: Feng et al. found that, when using \ac{ORBGRAND} with noise effect skipping for even-weight codes, conditioning the noise effect probabilities on their parity improves the \ac{SO} quality~\cite{fengLeveragingCodeStructure2025}. As \ac{GSegGRAND} exploits multiple constraints for skipping, a generalization of that correction is necessary.

In this paper, we propose a general framework to improve \ac{SOGRAND}'s \ac{SO} for any \ac{GRAND} algorithm with constrained guessing and demonstrate it on \ac{GSegGRAND}. Our simulation results highlight the importance of the improved \ac{SO} calculation for turbo product decoding: with the conventional \ac{SO} calculation, constrained guessing suffers from a high error floor above a \ac{BLER} of \num{e-1} for the \((32, 26)^2\) eBCH \ac{TPC}. In contrast, with the proposed \ac{SO} calculation, \ac{GSegGRAND} reaches a \ac{BLER} performance close to \ac{SOGRAND} while preserving most of the guesswork reduction observed during isolated component code decoding.

This paper is structured as follows: Section~\ref{sec:preliminaries} introduces the preliminaries such as \ac{GRAND}, Segmented \ac{GRAND} and turbo product decoding. 
Section~\ref{sec:genseggrand} outlines \ac{GSegGRAND} and its balanced tree transformation while Sec.~\ref{sec:socalculation} develops the improved \ac{SO} calculation for constrained guessing. Section~\ref{sec:results} presents the simulation results of \ac{GSegGRAND} under component decoding and turbo product decoding.

\begin{table}[t]
\caption{Summary of Important Notation}
\label{tab:notation_summary}
\centering
\setlength{\tabcolsep}{3pt}
\begin{tabular}{l l}
\hline
\textbf{Symbol} & \textbf{Description \quad(Defined in)} \\
\hline
\(c^n\) & Transmitted codeword (\ref{sec:channel}). \\
\(y^n, y_\text{hd}^n, \ell^n\) & Received sequence, hard-decision and LLRs  (\ref{sec:channel}). \\
\(s^{n-k}\) & Syndrome of the received hard-decision (\ref{sec:preliminaries-grand}). \\
\(e^n\) & True noise effect (\ref{sec:channel}).\\
\(z^{n,(i)}, z^n\) & \(i\)-th, and generic noise effect query (\ref{sec:preliminaries-grand}). \\
\(\nc, t\) & Number of constraints and segments (\ref{sec:seg-grand}).\\
\(n_\segi\) & Length of segment \(j\) (\ref{sec:seg-grand}).\\
\(z_\segi^{n_\segi}, e_\segi^{n_\segi}\) & Subeffect of \(z^n, e^n\) of segment \(j\) (\ref{sec:seg-grand}).\\
\(\ell_\segi^{n_\segi}\) & LLRs of \(\ell^n\) of segment \(j\) (\ref{sec:dec-alg-seg-grand}).\\
\(w_{\text{H},\segi}, w_{\text{L},\segi}, p_\segi\) & HW, LW, and parity of subeffect \(z_\segi^{n_\segi}\) (\ref{sec:seg-grand}).\\
\(w_\text{H}, w_\text{L}, w_\text{T}\) & HW, LW, and Total LW of \(z^n\) (\ref{sec:orbgrand}, \ref{sec:seg-grand}).\\
\(\bima{v_1 \dots v_\nc}\) & Integer value \(\sum_{i=1}^\nc v_i 2^{\nc-i}\) of \(v^\nc \in \fbinary^n\) (\ref{sec:bal-tree-struc})\\
\(p_\text{u}^\nc, p_\text{nu}^{t-\nc-1}\) & Unit and non-unit parities (\ref{sec:bal-tree-struc}). \\
\(\hat{c}^n\), \(\mathcal{L}\) & Codeword candidate and candidate list (\ref{sec:preliminaries-grand}, \ref{sec:prelim-sogrand}).\\
\(\pnoise, \plist\) & Accumulated noise effect and list probability (\ref{sec:prelim-sogrand}). \\
\hline
\end{tabular}
\end{table}

\section{Preliminaries}\label{sec:preliminaries}
\subsubsection{Notation}
In this work, sets are denoted by calligraphic uppercase letters (e.g., \(\mathcal{A}\)), and the cardinality of a set \(\mathcal{A}\) is denoted by \(|\mathcal{A}|\). We use $[n]$ to represent the set of integers $\{1,\ldots,n\}$. Random variables are denoted by uppercase letters (e.g., \(A\)), while specific realizations are denoted with lowercase letters (e.g., \(a\)).  A sequence or vector of length \(n\) is represented with a superscript, \(a^n\), where \(a_i\) denotes the \(i\)-th entry. Unless otherwise specified, vectors are indexed starting at \(1\) (\(i \in [n]\)); the only exception is for bit segments whose indexing begins at \(0\) for convenience. Matrices are likewise denoted by uppercase letters \(A\); the distinction between matrix and a random variable will be clear from context.
We adopt MATLAB-style slice notation. For a matrix \(A\) and integers \(a, b\), \(A_{a:b, :}\) denotes the submatrix formed by rows \(a\) through \(b\) (inclusive) and all columns, while \(A_{:, a:b}\) denotes all rows and columns \(a\) through \(b\). We use \(\fbinary = \{0, 1\}\) to denote the binary field. For two binary sequences \(a^n, b^n\), we denote the componentwise addition in \(\fbinary\) by \(a^n \oplus b^n\), and the parity via \(\Phi(a^n) = \sum_{i=1}^n a_i \mod 2\). Table~\ref{tab:notation_summary} lists the important notation used in this work. The section or equation in which the symbol is defined is listed at the end of each description.

\subsubsection{Channel}\label{sec:channel}
While the \ac{GRAND} principle works with any code, including non-linear codes~\cite{cohenAESErrorCorrection2023}, and various channel models, we consider in this work linear codes, and assume that the channel outputs \acp{LLR}. Specifically, let \(\mathcal{C} \subset \fbinary^n\) and \(H \in \fbinary^{m \times n}\) be the codebook and parity check matrix of an \((n,k)\) linear code with information and codeword length \(k\) and \(n\) and \(m \coloneqq n - k\) parity check constraints. Let \(c^n \in \mathcal{C}\) be a codeword that is transmitted over a binary symmetric memoryless channel, and \(y^n\) the sequence that is received. The receiver computes \acp{LLR} \(\ell^n \in \freal^n\) and hard-decision \(y_\text{hd}^n\) from \(y^n\), where \(y_{\text{hd},i} = 0\) if \(\ell_i > 0\) and \(y_{\text{hd},i} = 1\), otherwise. \(e^n = y_{\text{hd}}^n \oplus c^n\) denotes the true noise effect by which the channel perturbs \(c^n\), whose bit and block-wise a posteriori probability are given by
\begin{equation}\label{eqn:reliabilities}
    \begin{aligned}
        P_{E_i | Y_i}(e_i | y_i) &= (1 + \exp(-(-1)^{e_i} \ell_i))^{-1} \\
        P_{E^n | Y^n}(e^n | y^n) &= \prod_{i=1}^n P_{E_i | Y_i}(e_i | y_i)
        = \exp(-\text{Rel}(e^n)), \; \\
        \text{where} \; \text{Rel}(e^n) 
        &\coloneqq \sum_{i=1}^n |\ell_i| e_{i}.
    \end{aligned}
\end{equation}

\subsubsection{GRAND}\label{sec:preliminaries-grand}
The \ac{GRAND} principle~\cite{duffyCapacityAchievingGuessingRandom2019} enables universal decoding of any code of moderate redundancy. The key component of every \ac{GRAND} algorithm is a noise effect generator \(\textnormal{NoiseGen}(y^n) = [z^{n,(1)}, z^{n,(2)}, \dots]\) that iterates through noise effects in approximately decreasing order of a posteriori probability \(P_{E^n | Y^n}(z^{n,(i)} | y^n)\), where $z^{n,(i)}$ denotes the $i$-th generated binary noise effect. This implies that, for every $i$,
\(
    P_{E^n | Y^n}(z^{n, (i)} | y^n) \geq P_{E^n | Y^n}(z^{n, (i+1)} | y^n).
\)
For each query \(z^{n,(i)}\), \ac{GRAND} subtracts \(z^{n,(i)}\) from the hard decision and returns the first sequence that forms a valid codeword: \(y_\text{hd}^n \oplus z^{n,(i)} \in \mathcal{C}\), equivalent to 
\begin{equation}\label{eqn:parity-check-equations}
    H (y_\text{hd}^n \oplus z^{n,(i)}) = 0^{n-k} \Leftrightarrow
    H z^{n,(i)} = s^{n-k},
\end{equation}
where \(s^{n-k} \coloneqq H y_\text{hd}^n\) is the received syndrome. For clarity, we henceforth drop the superscript index \(i\) in \(z^{n,(i)}\): \(z^n\) and restore it only when the ordering of queries matters. 
If list decoding is desired, GRAND algorithms can continue the search until the required number of valid codewords is identified, yielding a list of candidate codewords \(\mathcal{L}\).
Decoding is abandoned if the number of guesses exceeds a given abandonment threshold~\cite{duffyCapacityAchievingGuessingRandom2019}. 

\subsubsection{ORBGRAND}\label{sec:orbgrand}
One soft-input \ac{GRAND} variant is \ac{ORBGRAND}~\cite{duffyOrderedReliabilityBits2022}, which uses an approximate guessing order to enable efficient hardware implementations. For clarity, this section assumes that the received \acp{LLR} \(\ell^n\) have increasing reliability \(|\ell_i|\). In practice, the received values are sorted upon reception.\footnote{Note that a full sorting is not necessary as \ac{ORBGRAND} can start decoding once the least reliable bits are identified, and incrementally sorts additional values until a valid codeword is found~\cite{riazSub08pJBitUniversal2025}.}
The noise effect generator of \ac{ORBGRAND} approximates the rank-ordered reliabilities \(|\ell_i|\) by a line
\(
    |\ell_i| \approx \beta i
\)
for \(i \in [n]\)
with slope \(\beta > 0\). With this approximation, the reliability of query \(z^n\) in \eqref{eqn:reliabilities} is proportional to the \ac{LW} of \(z^n\)~\cite{duffyOrderedReliabilityBits2022}
\begin{align}
    w_\textnormal{L} \coloneqq -\text{Rel}(z^n) / \beta \approx \sum_{i = 1}^n i z_i,\label{eq_ORB_rel}
\end{align}
implying that the noise effect probability decreases with \(w_\textnormal{L}\).
Consequently, \ac{ORBGRAND}'s noise effect generator approximates the optimal guessing order by iterating through noise effects in increasing \ac{LW}. 
To realize this, \ac{ORBGRAND} iterates for a given \ac{LW} \(w_\text{L}\) over all \acp{HW} \(w_\text{H}\) of \(z^n\) for which this \ac{LW} can be achieved, where \(w_\text{H} = \sum_{i = 1}^n z_i\).
For each \((w_\text{L}, w_\text{H})\) pair, the Landslide algorithm~\cite[Alg.~2]{duffyOrderedReliabilityBits2022} \(\text{LS}(w_\text{L}, w_\text{H})\) efficiently iterates through all noise effects of such given \ac{LW} and \ac{HW}. In this paper, Landslide's ability to generate noise effects of a specific \ac{HW} is leveraged to only generate noise effects that satisfy a set of binary constraints.

\subsubsection{SOGRAND}\label{sec:prelim-sogrand}
Recently, \ac{SOGRAND}~\cite{yuanSoftoutputGRANDLong2023}, a \ac{SISO} decoder, was proposed, which calculates block- and bit-wise \ac{SO} during the execution of any \ac{GRAND} algorithms including \ac{ORBGRAND}.
During decoding with $|\mathcal{L}|\geq1$, \ac{SOGRAND} accumulates the probability of the guessed noise effects and candidate codewords \(\hat{c}^n \in \mathcal{L}\)
\begin{equation}\label{eqn:prob-acc}
\begin{aligned}
    \pnoise &= \sum_{i=1}^\guessingIdx 
    P_{E^n | Y^n}\left(z^{n,(i)} | y^n\right), \\
    \plist &= \sum_{\mathclap{\hat{c}^n \in \mathcal{L}}}
    P_{E^{n} | Y^n}\left(\hat{c}^n \oplus y_\text{hd}^n | y^n\right),
\end{aligned}
\end{equation}
where \(\guessingIdx\) denotes the number of generated noise effects. 
The probability that each candidate codeword \(\hat{c}^n \in \mathcal{L}\) is correct and the probability that the correct codeword \(c^n\) is not in \(\mathcal{L}\) can then be estimated as~\cite{yuanSoftoutputGRANDLong2023}
\begin{equation}
\label{eqn:blockso}
\begin{aligned}
    \hat{P}_{C^n | Y^n}(\hat{c}^n | y^n) 
    &=
    \frac{
        P_{E^n | Y^n}\left(\hat{c}^n \oplus y_\text{hd}^n | y^n\right)
    }{
        \plist  + 
        (1 - \pnoise) 2^{-(n-k)}
    }, \\
    \hat{P}_{C^n | Y^n}(\mathcal{C} \setminus \mathcal{L} | y^n) 
    &=
    \frac{
        (1 - \pnoise) 2^{-(n-k)}
    }{
        \plist  + 
        (1 - \pnoise) 2^{-(n-k)}
    },
\end{aligned}
\end{equation}
respectively. 
Eq. \eqref{eqn:blockso} normalizes each codeword candidate's posterior probability by approximating the total codebook mass as \(\sum_{\hat{c}^n \in \mathcal{C}} P(\hat{c}^n | y^n) \approx \plist + (1 - \pnoise) 2^{-(n-k)}\), assuming a random codebook whose unexplored codewords are uniformly distributed among the unexplored noise effects.
Numerical studies confirm that this approximation also yields accurate soft-output for structured codes.

Besides this block-wise \ac{SO}, bit-wise \ac{SO} can be obtained by marginalizing the block-wise \ac{SO} as follows
\begin{equation}\label{eqn:bitso}
    \begin{aligned}
        &\hat{P}_{C_i | Y^n}(b | y^n) =
        \sum_{\hat{c}^n \in \mathcal{L}: \hat{c}_i = b}
        \hat{P}_{C^n | Y^n}(\hat{c}^n | y^n) \\
        &+ \hat{P}_{C^n | Y^n}(\mathcal{C} \setminus \mathcal{L} | y^n)
        P_{E_i | Y_i}(b \oplus y_{\text{hd},i} | y_i),
    \end{aligned}
\end{equation}
for \(b \in \{0, 1\}\) and \(i \in [n]\). These marginals can be compressed in a posterori \acp{LLR}
\begin{equation}\label{eqn:llrso}
    \ell_{\text{APP},i} = \ln\left(\hat{P}_{C_i | Y^n}(0 | y^n) / \hat{P}_{C_i | Y^n}(1 | y^n)\right).
\end{equation}
List decoding terminates if either a predefined maximal list size \(N_\mathcal{L}\) is reached or if the estimated list error probability is below a threshold \(T \in [0, 1]\).

\subsubsection{Turbo Product Decoding}
One application of \ac{SOGRAND} is turbo product decoding~\cite{pyndiahNearoptimumDecodingProduct1998}.
\Acp{TPC} are an efficient way to construct powerful long codes by concatenating short component codes, which can be efficiently decoded via turbo decoding. In this paper, we consider product codes based on an \((n, k)\) component code \(\mathcal{C}_\text{c}\). The resulting product code has rate \((k/n)^2\), where each codeword is represented by a binary $n \times n$ matrix $c^{n\times n}$~\cite{eliasErrorfreeCoding1954a}. A matrix is a valid product codeword if every row and column forms a valid codeword of \(\mathcal{C}_\text{c}\).

Let $c_{i,j}$ denote the entry of $c^{n\times n}$ at the $i$-th row and $j$-th column.
Let \(\ell_{\text{Ch},i,j}, \ell_{\text{A},i,j}, \ell_{\text{APP}, i,j}, \ell_{\text{E}, i,j}\) denote the channel, a priori, a posterori and extrinsic \ac{LLR} corresponding to bit \(c_{i,j}\), where \(i, j \in [n]\). For decoding, each bit is initially assigned an a priori \ac{LLR} \(\ell_{\text{A},i,j} = 0\). These \acp{LLR} are updated through multiple decoding iterations. In each iteration, all rows are first updated, followed by an update of all columns.
To update the \(i\)-th row, the \ac{LLR} vector \(\ell_{\text{A},i,:} + \ell_{\text{Ch},i,:}\) is input into \ac{SOGRAND}, yielding the a posteriori \ac{LLR} vector \(\ell_{\text{APP}, i,:}\) via \eqref{eqn:llrso}. Then, extrinsic \acp{LLR} \(\ell_{\text{E}, i,j} = \ell_{\text{APP}, i,j} - \ell_{\text{Ch}, i,j} - \ell_{\text{A},i,j}\) are extracted to update the a priori \acp{LLR} \(\ell_{\text{A},i,j} \gets \alpha \ell_{\text{E}, i,j}\) for \(j \in [n]\), where \(\alpha \in [0, 1]\) is a dampening factor~\cite{pyndiahNearoptimumDecodingProduct1998}. The column update follows the same procedure but slices along columns instead of rows. Iterative decoding terminates when either the current a posteri hard-decision \(\ind_{\{\ell_{\text{APP}, i,j} < 0\}}\) for \(i, j \in [n]\) forms a valid product codeword, or a maximum number \(I_\text{max}\) of full iterations is reached, in which case the last hard-decision is output. Here, \(\ind_{\{\ell_{\text{APP}, i,j} < 0\}}\) denotes the indicator function, which is \(1\) if \(\ell_{\text{APP}, i,j} < 0\) and \(0\) otherwise.  We refer to a row or column update as a half-iteration, and to a consecutive row and column update as a full-iteration.

\subsubsection{Segmentation and Segmented GRAND}\label{sec:seg-grand}
Recently, Rowshan and Yuan's proposed Segmented \ac{GRAND} and demonstrated that the average number of guesses of \ac{ORBGRAND} can be reduced if the parity check matrix fulfills specific properties~\cite{rowshanJournalNew2025}.
We outline the core idea of \cite{rowshanJournalNew2025} in this section and introduce notation that we later use to generalize the algorithm.
The core idea of Segmented \ac{GRAND} is to restrict noise effects generation to those effects that already fulfill a subset of the parity check equations in \eqref{eqn:parity-check-equations}. Each additional constraint can reduce the average guesswork by up to a factor of \(2\). To achieve this, Segmented \ac{GRAND} requires that the parity check matrix \(H\) can be transformed into a specific structure (see Fig.~\ref{fig:structure-segmented-grand}) via elementary row operations and column permutations.
\begin{figure}
    \centering
    \includegraphics{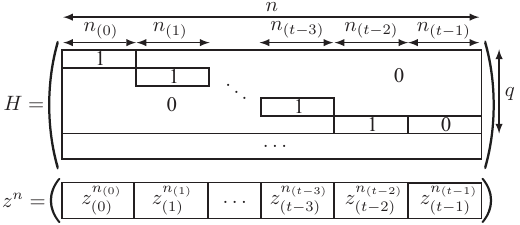}
    \caption{Structure of the parity check matrix required for Segmented GRAND}
    \label{fig:structure-segmented-grand}
\end{figure}
To restrict the noise effect generation by \(\nc\) constraints, the parity check matrix must contain \(\nc\) rows, of which at most one row has a \(1\) at each bit position. We now discuss such a requirement in more detail. 
In each of the first $\nc$ rows of $H$, the columns that contain a \(1\) in a specific row and the columns that contain a \(0\) in all \(\nc\) rows are partitioned into \(t = \nc + 1\) \emph{segments}, which are index sets
\begin{align*}
    \seg_\segi &= \{i: i \in [n], H_{j+1, i} = 1\}, \; \text{for \(j \in \{0, \dots, t-2\}\),}\\
    \seg_\segis{t-1} &= \{i: i \in [n], H_{j^\prime, i} = 0 \; \forall j^\prime \in [\nc]\}.
\end{align*}
To simplify notation later for the proposed generalization, the indexing of segments begins at \(0\). Segmented GRAND requires that the parity check matrix can be transformed such that \(\seg_{\segi} \cap \seg_{\segis{j^\prime}} = \emptyset\) for all $j\neq j^\prime$, as visualized in Figure 1 after column permutation. Throughout the paper, we permute the columns of the parity check matrix such that segment bits are grouped for visualization. 

We denote segment-specific properties via a subscript in brackets, e.g., \(\segi\): \(n_\segi \coloneqq |\seg_\segi|\) denotes the length of the \(j\)-th segment. \(e_\segi^{n_\segi} = (e_i)_{i \in \seg_\segi}\) and \(z_\segi^{n_\segi} = (z_i)_{ i \in \seg_\segi}\) are subsequences of the true noise effect \(e^n\) and current noise effect query \(z^n\), respectively, containing the bit values at the positions of segment \(j\), ordered by their original index $j$. We refer to these subsequences of noise effects as \emph{subeffects}. Let $z_{\segi,i}$ denote the $j$-th entry of $z^{n_\segi}_\segi$.
The quantities
\[
    w_{\text{H}, \segi} = \sum_{i = 1}^{n_\segi} z_{\segi,i}, 
    \quad w_{\text{L},\segi} = \sum_{i=1}^{n_\segi} i z_{\segi,i}, 
    \quad p_\segi = \Phi(z^{n_\segi}_\segi),
\]
denote the \ac{HW}, \ac{LW}, and parity of the \(j\)-th subeffect \(z^{n_\segi}_\segi\) of the current guess \(z^n\) and \( w_{\text{H}}^t, w_\text{L}^t\) and \(p^t = (p_\segi)_{j=0}^{t-1}\) denote the corresponding vectors.
We refer to the sum of the subeffects logistic weights as total \ac{LW} \(w_\text{T} = \sum_{j=0}^{t-1} w_{\text{L},\segi}\).
Since the segments are a partition of \([n]\), the subeffects form a segmentation of \(z^n\) and \(e^n\), respectively. For instance, for \(z^n\):
\(
    z^n = (z^{n_\segis{0}}_\segis{0} | z^{n_\segis{1}}_\segis{1} | \dots | z^{n_\segis{t-1}}_\segis{t-1}).
\)

Rowshan and Yuan observed that for this particular structure, the first \(\nc\) equations of \eqref{eqn:parity-check-equations} are fulfilled if and only if the parity of the \(j\)-th subeffect matches the \((j+1)\)-th syndrome, i.e., \(p_\segi = s_{j+1}\) for \(j \in \{0, \dots, \nc-1\}\). Segmented GRAND leverages this observation as follows: the outer iteration iterates over the total \ac{LW} \(w_\text{T}\), similarly to ORBGRAND, followed by a two-level partitioning that generates the noise effects. In the outer partition, all integer partitions of a total \ac{LW} \(w_\text{T}\) into \(t\) segments are generated: \(w_{\text{L}, \segis{0}}, \dots, w_{\text{L}, \segis{t-1}}\). For each segment \(j\), the inner partition generates all distinct integer partitions that sum up to \(w_{\text{L}, \segi}\), which correspond to the bit flip positions. By merging these segment-wise integer partitions, noise effects are constructed that, by design, fulfill the first \(\nc\) equations of \eqref{eqn:parity-check-equations} and have a total \ac{LW} of \(w_\text{T}\).

\section{Generalized Segmented GRAND}\label{sec:genseggrand}
While Segmented GRAND demonstrates that segmentation can enable guesswork reduction, it can only constrain the guessing for parity check rows that are in a specific structure.
In this section, we propose a generalized version of Segmented GRAND, termed \ac{GSegGRAND}, that enables the extraction of multiple constraints for a wide range of linear codes. The basis of this algorithm is the \emph{balanced tree structure}.
We will first introduce the tree structure and the balanced tree structure in Sec.~\ref{sec:bal-tree-struc}, followed by the corresponding decoding algorithm in Sec~\ref{sec:dec-alg-seg-grand}. In addition, a systematic procedure to transform most practical codes into a balanced tree structure will be provided in Sec.~\ref{sec:bal-tree-transf}. Finally, additional implementation details of the decoding algorithm are provided in Sec.~\ref{sec:implementation}.

\subsection{Balanced Tree Structure}\label{sec:bal-tree-struc}
\begin{figure}
    \centering
    \includegraphics{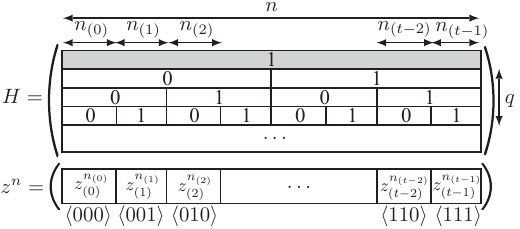}
    \caption{Balanced tree structure for \ac{GSegGRAND}. For even-weight codes, the structure can be extended by an additional all-one row (gray).}
    \label{fig:tree-structure}
\end{figure}
For a binary string \(v^\nc = (v_1,\dots,v_\nc) \in \fbinary^\nc\), we define its integer value as
\(
    \bima{v_1 v_2 \dots v_\nc} \coloneqq \sum_{i=1}^\nc v_i 2^{\nc-i}.
\)
We say that the first \(\nc\) rows of a parity-check matrix \(H\) are in tree structure if the columns of \(H\) are arranged in ascending order of their first \(\nc\) entries~\cite{rappErrorandErasureDecodingProduct2022} (see Fig.~\ref{fig:tree-structure}), i.e.,
\[  
    i < i^\prime \Rightarrow \bima{H_{1, i} H_{2, i} \dots H_{\nc,i}} \leq \bima{H_{1, i^\prime} H_{2, i^\prime} \dots H_{\nc,i^\prime}}.
\]
Any given \(H\) can be transformed into a tree structure by permuting its columns so that columns with identical first \(\nc\) entries are grouped into a segment \(j\), where \(j\) is the integer represented by those \(\nc\) binary entries. More explicitly, two columns $i,i^\prime$ are put in the same segment $j$ if $j= \bima{H_{1,i} H_{2,i} \dots H_{\nc,i}} = \bima{H_{1,i^\prime} H_{2,i^\prime} \dots H_{\nc,i^\prime}}$. This yields at most \(t = 2^\nc\) segments. To leverage this structure, we index segment \(j = \bima{v_1 \dots v_\nc} \in \{0, \dots, t-1\}\) interchangeably by its decimal \(j\) or binary representation \((v_1,\dots,v_\nc)\). Hence, for a segment parity we may write either \(p_\segi\) or \(p_\segis{\bima{v_1 v_2 \dots v_\nc}}\). Analogous to Segmented GRAND, the index set of segment \(\bima{v_1 \dots v_\nc}\) is given by
\[
    \seg_\segis{\bima{v_1 \dots v_\nc}} = \{i \in [n]: H_{1:\nc, i} = (v_1, \dots, v_\nc)\}.
\] 
As defined in Sec.~\ref{sec:seg-grand}, \(n_\segi, z_\segi^{n_\segi}, w_{\text{L},\segi}, w_{\text{H},\segi}\) and \(p_\segi\) denote length, subeffect, \ac{LW}, \ac{HW} and parity of segment \(j\). The tree structure is \emph{balanced} if all \(t = 2^\nc\) segments exist and the segments' lengths are approximately equal: \(n_\segis{\bima{v_1 \dots v_\nc}} \approx n / t\). 
In the following, we assume that the code's parity check matrix is in balanced tree structure. While this may appear restrictive, Sec.~\ref{sec:bal-tree-transf} demonstrates that a wide range of linear codes can be rewritten into this form.

By segmenting \(z^n\) into subeffects, the first \(\nc\) rows of \(s^{n-k} = H z^n\) in \eqref{eqn:parity-check-equations} can be written as a system of binary linear equations of the segment parities \(p^t\):
\begin{equation}\label{eqn:system}
\begin{aligned}
    s_u 
    &= \bigoplus_{i=1}^n H_{u, i} z_i 
    = \bigoplus_{(v_1, \dots, v_\nc) \in \fbinary^\nc} \bigoplus_{i=1}^{n_\segis{\bima{v_1 \dots v_\nc}}} v_u z_{\segis{\bima{v_1 \dots v_\nc}}, i} \\
    &= 
    \bigoplus_{\substack{
        (v_1,...,v_{\nc}) \in \fbinary^{\nc}: \\
        v_u = 1
    }} p_\segis{\bima{v_1 \dots v_\nc}},
\end{aligned}
\end{equation}
for each \(u \in [\nc]\).
In the second equality, we group terms segment-wise and use the fact that \(H_{u, i} = v_u\) for all bit positions \(i\) belonging to segment \(\bima{v_1 \dots v_u \dots v_\nc}\). 

For even-weight codes, we extend the tree structure by an additional all-one row  (see gray row in Fig.~\ref{fig:tree-structure}). To maintain a unified notation for both standard and extended tree structure, we denote the syndrome associated with this row by \(s_0:=\bigoplus_{i=1}^n y_{\text{hd},i}\). Without increasing the number of required segments, the all-one row introduces an additional constraint for the noise effect generation:
\begin{equation}\label{eqn:system3}
    s_0 = \bigoplus_{(v_1,...,v_{\nc}) \in \fbinary^{\nc}} p_\segis{\bima{v_1 \dots v_\nc}}
\end{equation}
which ensures that each generated noise effect will lead to a candidate codeword with even \ac{HW}.

We call a vector of segment parities \(p^t=(p_\segis{0}, \dots, p_\segis{t-1})\) that fulfill \eqref{eqn:system} (and \eqref{eqn:system3} in the case of the extended structure) a \emph{valid parity configuration} for a given syndrome \(s^{n-k}\) and parity check matrix \(H\). Accordingly, a vector of \acp{HW} \(w_{\text{H}}^t = (w_{\text{H},\segis{0}}, \dots, w_{\text{H}, \segis{t-1}})\) whose parities form a valid parity configuration is called a \emph{valid \ac{HW} configuration}. By construction, any noise effect \(z^n\) whose subeffects \acp{HW} form a valid \ac{HW} configuration automatically fulfills \(H_{1:\nc, :} z^n = (s_1,\ldots,s_\nc) = s_1^\nc\) (or \(H_{0:\nc, :} z^n = (s_0,s_1,\ldots,s_\nc) = s_0^\nc\) for the extended structure).

\begin{algorithm}[tbp]
\caption{\texttt{UnitParityComp} \label{alg:comp-unit-parity}}

\KwIn{Syndrome bits \(s^{n-k}\); Non-unit parities \(p_{\text{nu}}^{t-\nc-1}\)}
\KwOut{Valid unit parities \(p_{\text{u}}^\nc\) given \(p_{\text{nu}}^{t-\nc-1}\), and \(s^{n-k}\).}
Compute unit parities \(p_{\text{u}}^\nc = (p_{(\bima{10\dots0})}, p_{(\bima{01\dots0})}, \dots)\):
\begin{equation}\label{eqn:system2}
\begin{aligned}
    p_\segis{\bima{100\dots0}} 
    &\gets s_1 \oplus 
    \bigoplus_{\substack{
        (v_2,...,v_{\nc}) \in \\ 
        \{0, 1\}^{\nc-1} \setminus \{0^{\nc-1}\}
    }}
    p_\segis{\bima{1 v_2 v_3 \dots v_\nc}}  \\
    p_\segis{\bima{010\dots0}} 
    &\gets s_2 \oplus 
    \bigoplus_{\substack{
        (v_1,v_3,v_4,\dots,v_{\nc}) \in \\ 
        \{0, 1\}^{\nc-1} \setminus \{0^{\nc-1}\}
    }}
    p_\segis{\bima{v_1 1 v_3 \dots v_\nc}}  \\
    \vdots\\
    p_\segis{\bima{00\dots01}} 
    &\gets s_\nc \oplus 
    \bigoplus_{\substack{
        (v_1, \dots ,v_{\nc-1}) \in \\ 
        \{0, 1\}^{\nc-1} \setminus \{0^{\nc-1}\}
    }}
    p_\segis{\bima{v_1 \dots v_{\nc-1}1}} 
\end{aligned}
\end{equation}
\vspace{-\baselineskip}
\BlankLine
\Return{\(p_{\text{u}}^\nc\)}\;
\end{algorithm}

To efficiently iterate through valid parity configurations, we observe that the parities of the segments whose binary representation contains a single \(1\), i.e., the unit vectors, appear in exactly one equation of \eqref{eqn:system}.
Thus, we can explicitly solve for these parities by isolating them on the left-hand side as shown in \eqref{eqn:system2} in Alg.~\ref{alg:comp-unit-parity}.
We refer to the parities on the left-hand side as \emph{unit} parities 
\(
    p_{\text{u}}^\nc = \big(p_{(\bima{v_1, \dots, v_\nc})} : \sum_{i=1}^n v_i = 1\big),
\) 
and those on the right-hand side as \emph{non-unit} parities 
\(
    p_{\text{nu}}^{t-\nc-1} = \big(p_{(\bima{v_1, \dots, v_\nc})} : \sum_{i=1}^n v_i > 1\big).
\)
Note that \(p_{(0)}\) is not part of \eqref{eqn:system2} and neither a unit nor non-unit parity. 
Together, \(p_{\text{u}}^\nc\), \(p_{\text{nu}}^{t-\nc-1}\), and \(p_{(0)}\) partition the parity configuration
\(
    p^t = \texttt{Concatenate}(p_{(0)}, p_{\text{nu}}^{t-\nc-1}, p_{\text{u}}^\nc),
\)
where $\texttt{Concatenate}$ combines the subvectors into the full parity configuration.

Given a set of known non-unit parities \(p_{\text{nu}}^{t-\nc-1}\) and the received syndrome \(s^{n-k}\), Alg.~\ref{alg:comp-unit-parity} uniquely determines the unit parities \(p_{\text{u}}^\nc\) that in combination with \(p_{\text{nu}}^{t-\nc-1}\) form a valid parity configuration. To iterate over all valid parity configurations, we can therefore iterate over all combinations of the non-unit parities \(p_{\text{nu}}^{t-\nc-1}\) and compute the valid unit parities via Alg.~\ref{alg:comp-unit-parity}, which can be efficiently implemented as a binary matrix multiplication over  \(\fbinary\). 

Once \(p_{\text{nu}}^{t-\nc-1}\) is known it remains to determine \(p_{(0)}\). Let \(\mathcal{P}_{(0)} \subset \fbinary\) denote the feasible values of \(p_{(0)}\) that, together with \(p_{\text{u}}^\nc\) and \(p_{\text{nu}}^{t-\nc-1}\), form a valid parity configuration. The computation of \(\mathcal{P}_{(0)}\) is given in Alg.~\ref{alg:comp-zero-parity}:
\begin{algorithm}[tbp]
\caption{\texttt{ZeroParityComp} \label{alg:comp-zero-parity}}
\KwIn{Unit parities \(p_{\text{u}}^\nc\); Non-unit parities \(p_{\text{nu}}^{t-\nc-1}\); Global parity \(s_0\)}
\KwOut{\(\mathcal{P}_{(0)}\): Set of valid values for the \(0\)-th parity \(p_{(0)}\) given \(p_{\text{u}}^\nc, p_{\text{nu}}^{t-\nc-1}\), and \(s_0\)}
\uIf{\(H\) is in Extended Tree Structure}{
    \Return{\(
        \{
            s_0 \oplus 
            \bigoplus\limits_{\substack{
                (v_1, \dots,v_{\nc}) \in \\ 
                \{0, 1\}^{\nc} \setminus \{0^{\nc}\}
            }}
            p_\segis{\bima{v_1 \dots v_\nc}}\}
        \}
    \)}
    \;
}
\uIf{\(H\) is in Standard Tree Structure}{
    \Return{
    \(
        \{0, 1\}
    \)}\;
}
\end{algorithm}
the standard tree structure does not enforce any constraint on \(p_{(0)}\), and we therefore additionally iterate over both values: \(\mathcal{P}_{(0)} = \{0, 1\}\). In extended tree structure, \(p_{(0)}\) is constrained by \eqref{eqn:system3} which can be solved for \(p_{(0)}\) with \(p_{\text{u}}^\nc, p_{\text{nu}}^{t-\nc-1}\) and \(s_0\) on the right hand side. 

\subsection{Decoding Algorithm}\label{sec:dec-alg-seg-grand}
To leverage the segment parity constraints in \eqref{eqn:system} and \eqref{eqn:system3}, we follow Segmented GRAND~\cite{rowshanJournalNew2025} and, at each time, generate one subeffect for each segment, followed by combining the subeffects into one complete noise effect. Let \(|\tilde{L}_{\segi,i}|\) denote the random variable of the reliability of the \(i\)-th bit in segment \(j\) and let \(|\tilde{\ell}_{\segi, i}|\) be its realization. As pointed out by~\cite{ahsanullah2013introduction}%
\footnote{See Example 10.3 in \cite{ahsanullah2013introduction} for more detail.}, 
the q-th sample quantile of the reliabilities in every segment \(j\), \(|\tilde{L}_{\segi,1}|, |\tilde{L}_{\segi,2}|, \dots, |\tilde{L}_{\segi,n_\segi}|\), converges to \(F^{-1}(q)\) as $n_\segi\rightarrow\infty$, where \(F\) is the cumulative distribution function (CDF) of \(|\tilde{L}_{\segi,i}|\) and
\(F^{-1}(s) = \inf\{u: F(u) \geq s\}\). As a result, we expect that the plot of rank-ordered reliabilities \(|L_{\segi, i}|\) versus the rank \(i\) is approximately the function 
\begin{equation}\label{eqn:approx}
    |L_{\segi, i}| \approx F^{-1}(i/n_\segi), \qquad \text{for \(i \in [n_\segi]\)},
\end{equation}
where we denote rank-ordered reliabilities without tilde. Duffy et al.~\cite{duffyOrderedReliabilityBits2022} observed that for practical noise effect generation, the rank-ordered reliabilities can be approximated by a line with intercept \(0\). Applying a first-order Taylor approximation to \eqref{eqn:approx} with intercept \(0\) results in a linear relation between rank and reliabilities
\[
    \begin{aligned}
    |L_{\segi,i}| &\approx F^{-1}(i/n_\segi) \approx \beta_\segi i, \\
    \quad \beta_\segi &= \frac{1}{n_\segi} \frac{d F^{-1}(s)}{d s} = \frac{(F^{-1})^\prime}{n_\segi}
    \end{aligned}
\]
for \(i \in [n_\segi]\) and  \(j \in \{0, \dots, t-1\}\), where \(\beta_\segi\) denotes the slope of the \(j\)-th segment.
Since the tree structure is balanced (i.e., \(n_\segi \approx n / t\)) and \(F\) is segment independent, the slopes \(\beta_\segi\) are all approximatly equal: \(\beta_\segis{0} \approx \dots \approx \beta_\segis{t-1}\).
With this approximation, the reliability of a noise effect \(z^n\) in \eqref{eqn:reliabilities} for a realization \(\ell_{\segi,i}\) becomes
\begin{equation}\label{eqn:rel-gen-seg-grand}
\begin{aligned}
    \text{Rel}(z^n) 
    &= 
    \sum_{j=0}^{t-1} \sum_{i=1}^{n_\segi} |\ell_{\segi, i}| z_{\segi, i}
    \approx \beta_\segis{0} w_\text{T}.
\end{aligned}
\end{equation}
Owing to the balanced property of the tree structure, we can factor the common slope \(\beta_\segis{0}\), which makes the reliability of \(z^n\) proportional to its total \ac{LW} \(w_\text{T}\). Combined with \eqref{eqn:reliabilities} and \eqref{eqn:rel-gen-seg-grand}, this shows that the probability of a noise effect decreases monotonically with increasing total \ac{LW} \(w_\text{T}\).
Consequently, constrained noise effects with equal probability can be generated by merging independently generated subeffects that share a valid \ac{HW} configuration and whose segment-wise \acp{LW} sum to the same total \(w_\text{T}\).

Algorithm~\ref{alg:noise-generator} describes the high-level components of the \ac{GSegGRAND} algorithm: an outer iteration iterates over increasing total \ac{LW} \(w_\text{T}\) to create noise effects in decreasing order of probability. For each \(w_\text{T}\), the algorithm generates valid \ac{HW} configurations \(w_{\text{H}}^t\) corresponding to the received syndrome \(s^{n-k}\).
Specifically, the \ac{HW} generator \(\texttt{HWGen}(w_\text{T})\) iterates over all possible \ac{HW} assignments \(w_{\text{H,nu}}^{t-\nc-1}\) for the non-unit segments, for which the current total \ac{LW} \(w_\text{T}\) can be achieved (see the implementation details in Sec.~\ref{sec:implementation} for details). For each \(w_{\text{H,nu}}^{t-\nc-1}\), \texttt{UnitParityComp} and \texttt{ZeroParityComp} given in Alg.~\ref{alg:comp-unit-parity} and Alg.~\ref{alg:comp-zero-parity}, respectively, determine the parities of the unit segments and of segment \(0\) that result in a valid \ac{HW} configuration. For these segments, \(\texttt{HWGen}(w_\text{T}, p_\text{u}^\nc)\) and \(\texttt{HWGen}(w_\text{T}, p_\segis{0})\) only iterate over the \acp{HW} with correct parity \(p_\text{u}^\nc\) and \(p_\segis{0}\), skipping every second \ac{HW} for each segment. For each valid \ac{HW} configuration \(w_{\text{H}}^t\), the 2D Landslide algorithm \(\texttt{2DSegLS}(w_\text{T}, w_\text{H}^t)\) (see Sec.~\ref{sec:implementation}) generates all noise effects of this \ac{HW} configuration \(w_{\text{H}}^t\) and \ac{LW} \(w_\text{T}\) by partitioning the \ac{LW} among the segments.

\SetKwFunction{UnitParityComp}{UnitParityComp}
\SetKwFunction{ZeroParityComp}{ZeroParityComp}
\SetKwFunction{HWGen}{HWGen}
\SetKwFunction{Concatenate}{Concatenate}
\SetKwFunction{TwoDSegLS}{2DSegLS}

\begin{algorithm}[tbp]
\caption{\ac{GSegGRAND} Noise Effect Generator}\label{alg:noise-generator}
\small
\tcc{Iterate over increasing total \ac{LW} \(\Leftrightarrow\) decreasing noise effect prob.}
\For{\(w_\textnormal{T} \in \{0, 1, 2, \dots\}\)}{
    \tcp{Iterate over non-unit \acp{HW}}
    \For{\(
        w_{\textnormal{H,nu}}^{t-\nc-1} \in \HWGen{\ensuremath{w_\textnormal{T}}}
    \)}{
        \tcc{Iterate over unit \acp{HW} and incorporate \(H_{1:\nc, :} z^n = s_1^\nc\).}
        \(p_{\text{nu}}^{t-\nc-1} \gets w_{\textnormal{H,nu}}^{t-\nc-1} \bmod 2\)\;
        \(p_\textnormal{u}^\nc 
            \gets 
            \UnitParityComp{
                \ensuremath{p_{\textnormal{nu}}^{t-\nc-1}, s^{n-k}}
            }
        \)\;
        \For{\(
            w_{\textnormal{H,u}}^{\nc} \in \HWGen{\ensuremath{w_\textnormal{T}, p_\textnormal{u}^\nc}}
        \)}{
            \tcc{Iterate over segment \(0\) \acp{HW} and incorporate code evenness}
            \(\mathcal{P}_0 
            \gets  
            \ZeroParityComp{
                \ensuremath{p_{\text{u}}^\nc, p_{\textnormal{nu}}^{t-\nc-1}, s_0}}
            \)\;
            \For{$p_\segis{0} \in \mathcal{P}_0$, {\normalfont\bfseries for} $w_{\textnormal{H},\segis{0}} \in \HWGen{\ensuremath{w_\textnormal{L}, p_\segis{0}}}$}{
                \(w_\textnormal{H}^t \gets \texttt{Concatenate}(w_{\textnormal{H,\segis{0}}}, w_{\textnormal{H,nu}}^{t-\nc-1} \!\!, w_{\textnormal{H,u}}^{\nc})\)\;
                \For{\(
                    z^n \in \TwoDSegLS{\ensuremath{w_{\textnormal{T}}, w_\textnormal{H}^t}}
                \)}{
                    \textbf{yield} \(z^n\)\;
                }
            }
        }
    }
}
\end{algorithm}

\subsubsection{Example: Even-weight Code with 3 Constraints}
Figure~\ref{fig:example-generator} visualizes this process for an even-weight code and \(3\) constraints. In this case, there exist only one non-unit segment \(j=\bima{11}\) whose HW can be assigned freely. For each \(w_{\text{H}, \segis{\bima{11}}}\), the parities of the unit segments follow directly as:
\begin{align*}
    p_{\segis{\bima{10}}} = s_2 \oplus p_\segis{\bima{11}}, \quad
    p_{\segis{\bima{01}}} = s_1 \oplus p_\segis{\bima{11}}
\end{align*}
where \(p_\segis{\bima{11}} = w_{\text{H},\segis{\bima{11}}} \bmod 2\), allowing the \ac{HW} generator to skip every second \ac{HW} with incorrect parity. Likewise, the \ac{HW} generator for the \(0\) segment restricts generation of subeffects to parities of
\begin{align*}
    p_\segis{\bima{00}} &= s_0 \oplus p_\segis{\bima{10}} \oplus p_\segis{\bima{01}} \oplus p_\segis{\bima{11}}.
\end{align*}
For a given set of \acp{HW}, \texttt{2DSegLS} produces subeffects with corresponding \ac{HW} whose \acp{LW} sum to \(w_\text{T}\), which are merged and sent to the codebook checker.

\begin{figure}
    \centering
    \includegraphics[width=\linewidth]{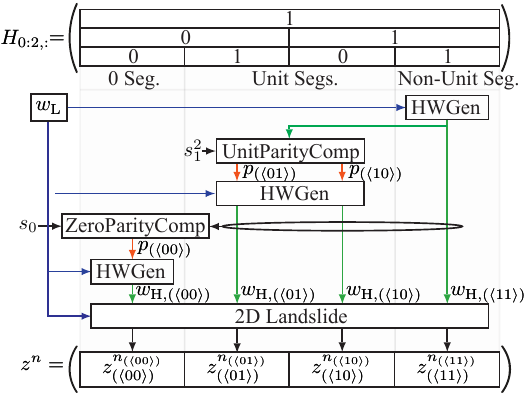}
    \caption{Visualization \ac{GSegGRAND} noise effect generator for an even weight code with three constraints: the noise effect generator iterates over the \acp{HW} of the non-unit (NU) segments in an outer iteration, with iterations over the \ac{HW} of the unit segments and \(0\) segment in an inner loop.}
    \label{fig:example-generator}
\end{figure}

\subsection{Balanced Tree Transformation}\label{sec:bal-tree-transf}
As outlined in Sec.~\ref{sec:dec-alg-seg-grand}, the decoder operates on an approximately balanced tree structure, i.e., the segments have approximately equal length. In this section, we propose a transformation which we call \emph{balance tree transformation} that approximately balances parity check matrices. The idea of this transformation is to apply a random invertible matrix to the parity check matrix $H \in \fbinary^{m \times n}$, where \(m \coloneqq n - k\), before rewriting it in tree structure. The probability that we get a transformed matrix with the desired property is bounded below by $0.22$ as $m\rightarrow \infty$. To provide a theoretical validation, Theorem~\ref{thm_uniform_matrix} is first proved without invertible assumption on the random transformation. After Theorem \ref{thm_uniform_matrix}, we will restrict to invertible random transformation so that the rewritten code is equivalent to the given code.
\begin{thm}\label{thm_uniform_matrix}
    Let $H^{m,\kappa m}$ be a given $m\times \kappa m$ binary matrix, for $\kappa\in \mathbb{N}$. Separate $H^{m,\kappa m}$ into $\kappa$ non-overlapping sub-matrices with dimension $m \times m$ and denote each of the sub-matrices as $H^{m,\kappa m}_{\{j\}}$. Let 
    \[
        \mathcal{H}^{m,\kappa m}:= \{H^{m,\kappa m}|H^{m,\kappa m}_{\{j\}} \text{ is invertible for all }j\leq \kappa \}
    \]
    be the set of parity check matrices whose sub-matrices are all invertible. Let $\widetilde{H}^{m,\kappa m}=A^{m,m}H^{m,\kappa m}$, where $A^{m,m}\in\fbinary^{m\times m}$ is a random binary matrix whose entries are i.i.d Bernoulli distributed with rate $0.5$. Then,
    \begin{enumerate}[label=\arabic*), leftmargin=*]
        \item Each $\widetilde{H}^{m,\kappa m}_{\{j\}}=A^{m,m}H^{m,\kappa m}_{\{j\}}$ is uniformly distributed over all binary $m\times  m$ matrices given $H^{m,\kappa m}\in\mathcal{H}^{m,\kappa m}$ and hence can be treated as matrix with i.i.d entries from Bernoulli distribution with rate $0.5$. 
        \item The rows in $\widetilde{H}^{m,\kappa m}$ are independent to each other. 
        \item Given $\delta>0$, let $\{H^{m,\kappa m}\}_{m\in\mathbb{N}}$ be any given sequence of parity check matrices such that $H^{m,\kappa m}\in\mathcal{H}^{m,\kappa m}$ for all $m$, and let $H^{m,\kappa m}_{{\{j\}},i,s}$ denote the entry of $H^{m,\kappa m}_{\{j\}}$ at the $i$-th row and $s$-th column, then 
        \begin{align}
            \lim_{m\rightarrow \infty}P\left(\bigcup_{j=1}^{\kappa}\bigcup_{i=1}^m\Bigg\{\left|\sum_{s=1:m}\frac{\widetilde{H}^{m,\kappa m}_{\{j\},i,s}}{m}-0.5\right|>\delta\Bigg\}\right)=0.
        \end{align}
    \end{enumerate}
\end{thm}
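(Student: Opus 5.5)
Part 1 reduces to a bijection argument. Fix $j$ and write $B := H^{m,\kappa m}_{\{j\}}$, which is invertible by the assumption $H^{m,\kappa m}\in\mathcal{H}^{m,\kappa m}$. The map $\phi_B:\fbinary^{m\times m}\to\fbinary^{m\times m}$, $A\mapsto AB$, is a bijection, with inverse $X\mapsto XB^{-1}$. Since $A^{m,m}$ is uniform over all $2^{m^2}$ binary matrices, for every fixed $X$ we have $P(A^{m,m}B = X) = P(A^{m,m}=XB^{-1}) = 2^{-m^2}$. So $\widetilde{H}^{m,\kappa m}_{\{j\}}$ is uniform on $\fbinary^{m\times m}$. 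A uniform distribution on $\fbinary^{m\times m}$ is exactly the product of $m^2$ Bernoulli$(1/2)$ laws, so its entries are i.i.d.\ Bernoulli$(0.5)$. Note that this is a statement about each block separately; the blocks are not jointly independent, since they are all driven by the same $A^{m,m}$.

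For part 2 I will use the fact that row $i$ of $\widetilde{H}^{m,\kappa m}$ equals $a_i H^{m,\kappa m}$, where $a_i$ is the $i$-th row of $A^{m,m}$. The rows $a_1,\dots,a_m$ are independent because the entries of $A^{m,m}$ are i.i.d. The rows of $\widetilde{H}^{m,\kappa m}$ are deterministic functions $a_i\mapsto a_iH^{m,\kappa m}$ of these independent vectors, so they are mutually independent. This holds even without any invertibility assumption. Combining with part 1 shows that, within block $j$, each row $a_iB$ is itself uniform on $\fbinary^{m}$, because $a\mapsto aB$ is a bijection.

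For part 3 I will bound the probability of the union by a union bound over the $\kappa m$ events indexed by $(j,i)$. By part 1, for fixed $(j,i)$ the row sum $\sum_{s=1}^m \widetilde{H}^{m,\kappa m}_{\{j\},i,s}$ is Binomial$(m,1/2)$, since the row $a_iB$ is uniform on $\fbinary^m$. Hoeffding's inequality then gives
\[
P\Bigl(\Bigl|\tfrac{1}{m}\textstyle\sum_{s} \widetilde{H}^{m,\kappa m}_{\{j\},i,s} - 0.5\Bigr|>\delta\Bigr)\le 2e^{-2m\delta^2}.
\]
Summing over the $\kappa m$ pairs bounds the union probability by $2\kappa m e^{-2m\delta^2}$, which tends to $0$ as $m\to\infty$ because $\kappa$ is fixed.

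The only step needing care is the union bound in part 3: Chebyshev's inequality would give a per-event bound of only $O(1/m)$, and summing that over $\kappa m$ events does not vanish. The argument therefore needs an exponential concentration bound such as Hoeffding's. This works without any independence across rows or blocks, so the dependence between blocks noted in part 1 does no harm.
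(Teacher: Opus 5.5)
Your proposal is correct and follows the paper's proof essentially step for step. It uses the bijection $A\mapsto AB$ to get uniformity of each block, obtains independence of the rows of $\widetilde{H}^{m,\kappa m}$ because they are functions of the independent rows of $A^{m,m}$, and combines Hoeffding's inequality with a union bound over the $\kappa m$ row--block pairs. Your remark that the uniform law on $\fbinary^{m\times m}$ is exactly the i.i.d.\ Bernoulli$(0.5)$ product law justifies that step more cleanly than the paper's appeal to a coupling technique, but the route is the same.
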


\begin{proof}
    1) Let $\widetilde{h}\in\fbinary^{m\times m}$ be any matrix. Then,
    \begin{align*}
       &P\left(\widetilde{H}^{m,\kappa m}_{\{j\}}
       =
       \widetilde{h}|H^{m,\kappa m}_{\{j\}}\right) 
       \stackrel{(\text{a})}{=}
       P\left(A^{m, m}H^{m,\kappa m}_{\{j\}}=\widetilde{h}|H^{m,\kappa m}_{\{j\}}\right)\\
       &\stackrel{(\text{b})}{=}
       P\left(A^{m,m}=\widetilde{h}[H^{m,\kappa m}_{\{j\}}]^{-1} \middle|H^{m,\kappa m}_{\{j\}}\right)=2^{-m^2}.
    \end{align*}
    (a) uses the invertibility of the sub-matrix \(H^{m,\kappa m}_{\{j\}}\) as $H^{m,\kappa m}\in\mathcal{H}^{m,\kappa m}$. (b) holds since the \(m^2\) entries of \(A^{m,m}\) are i.i.d. Bernoulli distributed with rate \(0.5\) so it equals \(\widetilde{h}[H^{m,\kappa m}_{\{j\}}]^{-1}\) with probability \((1/2)^{m^2}\).
    Hence $\widetilde{H}^{m,\kappa m}_{\{j\}}=A^{m, m}H^{m,\kappa m}_{\{j\}}$ is uniformly distributed over all binary $m\times  m$ binary matrices. With coupling technique \cite{Thorisson2000CouplingSA}, $\widetilde{H}^{m,\kappa m}_{\{j\}}$ can be treated as a matrix with i.i.d entries from Bernoulli distribution with rate $0.5$. This completes the first claim.

    2)
    Notice that $\widetilde{H}^{m,\kappa m}_{i,j}=\sum_{s=1}^m A^{m,\kappa m}_{i,s}H^{m,\kappa m}_{s,j} \mod 2$. Since $H^{m,\kappa m}$ is given and the rows of $A^{m,\kappa m}$ are independent, the rows of $\widetilde{H}^{m,\kappa m}$ are also independent. This completes the second claim.

    3)
    To prove the third claim, we consider the sum 
    \(S_{\{j\}, i, m} = \sum_{s=1}^m \widetilde{H}^{m,\kappa m}_{\{j\},i,s} / m\) 
    whose summands are bounded between \(0\) and \(1/m\), and according to claim 1, i.i.d. Bernoulli-distributed with rate \(0.5\). Hence, the expected value of \(S_{\{j\}, i, m}\) is \(0.5\) and applying Hoeffding's inequality~\cite{hoeffding1963probability} to \(S_{\{j\}, i, m}\) yields, given \(\delta > 0\),
    \(
        P(|S_{\{j\}, i, m} - 0.5| > \delta) 
        \leq 
        2 \exp(-2\delta^2 m).
    \)
    Applying the union bound to \(\bigcup_{i=1}^m\{|S_{\{j\}, i, m} - 0.5| > \delta\}\) results in
    \begin{align*}
      \lim_{m\rightarrow \infty}P\left(\bigcup_{i=1}^m
      \Bigg\{
          \left|\sum_{s=1}^m\frac{\widetilde{H}^{m,\kappa m}_{\{j\},i,s}}{m}-0.5\right| > \delta
      \Bigg\}
      \right)=0,
    \end{align*}
    because \(\lim_{m \to \infty} \sum_{i=1}^m 2\exp(-2\delta^2m) = 0\).
    This means we can expect that a sub-matrix has almost half the number of ones and zeros within each row. By the union bound, we are able to conclude that this happens to all sub-matrices of $\widetilde{H}^{m,\kappa m}$ simultaneously as we have a fixed code rate $1/\kappa$, which finishes the third claim. 
\end{proof}
The first result of Theorem \ref{thm_uniform_matrix} states that a parity check matrix, whose sub-matrices are invertible, can be transformed into a random matrix whose entries in each sub-matrix can be treated as i.i.d. variables with equal probability of being one or zero. This further leads to the properties stated in the second and third results. Choosing $\delta$ close to zero and applying the third result shows that as $m\rightarrow\infty$, we can expect that the number of ones is approximately the same as the number of zeros in each row of $\widetilde{H}^{m,\kappa m}$, while the coderate is constant: \(r = 1 - 1 / \kappa\). Together with the fact that each row is independent in $\widetilde{H}^{m,\kappa m}$, we can expect that each segment length \(n_\segis{\bima{v_1 \dots v_\nc}}\) of $\widetilde{H}^{m,\kappa m}$ in tree structure is approximately the same, that is, the set of columns are divided into subsets with approximately equal sizes. The name balanced tree transformation is given based on this property.

To retain the same structure of the code represented by $H^{m,\kappa m}$, we further require that the random transformation $A^{m,m}$ is invertible. The probability that a random binary matrix $A^{m, m}$ is invertible is bounded below by 0.22 as $m\rightarrow \infty$ \cite{BRENNAN1987311,waterhouse1987often,kolchin1999random}. Therefore, by the union bound, the probability that $A^{m,m}$ is invertible and each segment length \(n_\segi\) of $\widetilde{H}^{m,\kappa m}$ in tree structure is approximately the same is bounded below as $m\rightarrow \infty$. This establishes that, in the limit, we can transform parity check matrices into tree structure with desired properties with positive probability bounded below. We call the procedure of multiplying $H$ by a random matrix the \textit{balance transformation}. Additionally, the procedure of balance transformation followed by rewriting the resulting matrix into tree structure is called \textit{balanced tree transformation}. 

As the balanced tree transformation is invertible and hence does not change the code, we only need to apply the transformation once for any given parity check matrix and use the rewritten parity check matrix and generator matrix for application. Therefore, this transformation will not add additional complexity to the decoding.
 
\subsection{Implementation Details}\label{sec:implementation}
An important implementation detail is that the \ac{HW} configurations are restricted to those configurations for which the given total \ac{LW} \(w_\text{T}\) is realizable; otherwise, the algorithm would generate configurations for which 2D Landslide cannot find valid noise effects. This is particularly important as \ac{GSegGRAND} supports multiple parity check constraints requiring numerous segments, which can lead to a large number of invalid \ac{HW} configurations if not handled properly. To obtain bounds efficiently, \ac{GSegGRAND} places the parity-check constraints in its outer loops and performs logistic-weight partitioning within its inner loops. This is in contrast to Segmented GRAND, which first partitions the \ac{LW} before enforcing parity constraints. Furthermore, the iteration over error pattern bases (segments without bit flips) used in \cite{rowshanJournalNew2025} is implicitly incorporated in the proposed approach: a segment with \ac{HW} \(0\) naturally corresponds to an empty segment, eliminating the need for an explicit base iteration.

This iteration order provides an additional advantage: because the structure of the proposed algorithm on the segment level corresponds to \ac{ORBGRAND}, i.e., first \ac{HW} iteration followed by \ac{LW} partitioning, the decoder can build directly upon the existing Landslide algorithm, for which efficient hardware implementation already exists~\cite{riazSub08pJBitUniversal2025}. The following section outlines how the outer \ac{HW} generator and inner 2D Landslide algorithm are realized.

\subsubsection{Hamming Weight Generator (\texttt{HWGen})}
The \ac{HW} generation in Alg.~\ref{alg:noise-generator} is executed through multiple stages to incorporate the parity check constraints. Note that for the $j$-th segment, a subeffect with \ac{HW} \(w_{\text{H},\segi}\) can only have \acp{LW} \(w_{\text{L},\segi}\) within the range, 
\(w_{\text{L,min},\segi} \leq w_{\text{L},\segi} \leq w_{\text{L,max},\segi}\), where~\cite{duffyOrderedReliabilityBits2022}
\[
    \begin{aligned}
        w_{\text{L,min}, \segi} &= \frac{w_{\text{H},\segi} (w_{\text{H},\segi} + 1)}{2}, \\
        w_{\text{L,max}, \segi} &= \frac{n_\segi (n_\segi + 1)}{2} \\
        &\phantom{=} - \frac{(n_\segi - w_{\text{H},\segi}) (n_\segi - w_{\text{H},\segi} + 1)}{2}.
    \end{aligned}
\]
A given \ac{HW} configuration \(w_\text{H}^t\) therefore has noise effects with \ac{LW} within the sum of the segment-wise bounds 
\begin{equation}\label{eqn:bounds-log-weight}
    \sum_{j=0}^{t-1} w_{\text{L,min},\segi}
    \leq
    w_\text{T}
    \leq
    \sum_{j=0}^{t-1} w_{\text{L,max},\segi}.
\end{equation}
2D Landslide can find valid noise effects if the current total \ac{LW} \(w_\text{T}\) is within these bounds. We ignore the upper bound, as \ac{GSegGRAND} most of the time terminates for a relatively small \(w_\text{T}\) and the bound does not restrict \(w_\text{H}^t\) in practice. To incorporate the lower bound, assume that the \ac{HW} generator has already chosen the \ac{HW} for a subset of segments \(j_1, \dots, j_r\). The lower bound of \eqref{eqn:bounds-log-weight} enforces an upper bound on the \ac{HW} of the next segment \(j_{r+1}\)
\begin{equation}\label{eqn:cond}
    \frac{
        w_{\text{H},\segis{j_{r+1}}} 
        (w_{\text{H},\segis{j_{r+1}}} \hspace{-0.2em} + \hspace{-0.15em} 1)
    }{2} 
    = w_{\text{L,min},\segis{j_{r+1}}} \leq w_\text{T} - \sum_{i=1}^{r} w_{\text{L,min},\segis{j_i}}.
\end{equation}

In practice, the \ac{HW} generator increases a \ac{HW} \(w_{\text{H},\segis{j_{r+1}}}\) starting from \(0\) as long as \eqref{eqn:cond} is fulfilled. As soon as \eqref{eqn:cond} is violated, \(w_{\text{H},\segis{j_{r+1}}}\) is set to \(0\) and the generator increases the previous \ac{HW} \(w_{\text{H},\segis{j_{r}}}\) by one. If the increase also violates \eqref{eqn:cond} it moves up another level. If the \ac{HW} \(w_{\text{H}, \segis{j_r}}\) corresponds to a unit segment, the generator uses the same stopping condition but starts at the precalculated parity \(p_\segis{j_r}\) instead of \(0\) and increases the \ac{HW} by \(2\) each step. This ensures that the generator implicitly only generates \acp{HW} of correct parity.

\subsubsection{2D Landslide (\texttt{2DSegLS})}
The 2D Landslide algorithm \(\texttt{2DSegLS}(w_\text{T}, w_{\text{H}}^t)\) partitions the total \ac{LW} \(w_\text{T}\) among the segments: \(w_\text{L}^t = (w_{\text{L}, \segis{0}}, \dots, w_{\text{L}, \segis{t-1}})\) and generates noise effects whose subeffects have \ac{HW} \(w_{\text{H}, \segi}\) and \ac{LW} \(w_{\text{L}, \segi}\).
To realize this, the algorithm runs, for each integer partition \(w_\text{L}^t\), the Landslide algorithm (see Sec.~\ref{sec:orbgrand}) on each segment.
In practice, this forms a nested enumeration over all combinations of subeffects: each landslide generator \(\text{LS}(w_{\text{H}, \segi}, w_{\text{L}, \segi})\) produces its first noise effect and the last generator iterates through all its effects. When it runs out of effects, it resets, and the preceding landslide generator advances to its next noise effect. This process continues in a nested manner until the first generator has no remaining effects. After each enumeration, the subeffects are merged and tested by the codebook checker.

\section{Soft-output Calculation for Constrained Guessing}\label{sec:socalculation}
In this section, we outline how \ac{SOGRAND}'s \ac{SO} computation can incorporate constrained guessing, enabling us later to use \ac{GSegGRAND} as a component decoder for turbo product decoding with minimal performance loss (less than \SI{0.2}{\dB}) over \ac{ORBGRAND} without constrained guessing.
We explain first the core idea for noise effect skipping in \ac{SOGRAND}, which was proposed in \cite{fengLeveragingCodeStructure2025}, and then apply the idea to \ac{GSegGRAND}.

Constrained guessing restricts the noise effect generation to noise effects \(z^n\) to a subspace \(\const \subset \fbinary^n\). 
If a linear code is even, i.e., has only codewords of even weight, \ac{ORBGRAND} can restrict generation to noise effects of correct parity \(s_0 = \Phi(y_{\text{hd}}^n) = \Phi(e^n)\):
\(
    \const_{\text{ORB},s_0} = \{z^n : \Phi(z^n) = s_0\},
\)
by leveraging Landslides' ability to generate effects of specific \ac{HW}. Feng et al. noticed that in this case, the quality of \ac{SOGRAND}'s \ac{SO} can be improved by conditioning the noise effect probability terms in \eqref{eqn:blockso} on the parity constraint \(\const_{\text{ORB},s_0}\)~\cite{fengLeveragingCodeStructure2025}. 
That means for a general constraint space \(\const\), we have
\begin{equation}\label{eqn:cond-noise-prob}
    P_{E^n | Y^n, E^n \in \const}\Big(z^n | y^n\Big)
    = 
    \ind_{\{z^n \in \const\}} P_{E^n | Y^n}(z^n | y^n) / \psi,
\end{equation}
where
\(
    \psi \coloneqq P_{E^n | Y^n}(\const | y^n) = \sum_{z^n \in \const} P_{E^n | Y^n}(z^n | y^n)
\)
and \(\ind_{\{z^n \in \const\}}\) denotes the indicator function, which is \(1\) if \(z^n \in \const\) and \(0\) otherwise.
Replacing the unconditional probabilities \(P_{E^n | Y^n}(z^n | y^n)\) in \eqref{eqn:blockso} with \eqref{eqn:cond-noise-prob} and expanding the fractions by \(\psi\) results in the improved \ac{SO} expressions
\begin{equation}\label{eqn:block-so-updated}
\begin{aligned}
    \hat{P}_{C^n | Y^n}(\hat{c}^n | y^n) 
    &=
    \frac{
        P_{E^n | Y^n}\left(\hat{c}^n \oplus y_\text{hd}^n | y^n\right)
    }{
        \plist  + 
        (\psi - \pnoise) 2^{-(n-k - \lambda)}
    }, \\
    \hat{P}_{C^n | Y^n}(\mathcal{C} \setminus \mathcal{L} | y^n) 
    &=
    \frac{
        (\psi - \pnoise) 2^{-(n-k - \lambda)}
    }{
        \plist  + 
        (\psi - \pnoise) 2^{-(n-k - \lambda)}
    },
\end{aligned}
\end{equation}
where \(\hat{c}^n \in \mathcal{L}\) and $\lambda=1$ while \(\pnoise\) and \(\plist\) are given by \eqref{eqn:prob-acc}. 

In addition to \(\psi\), \eqref{eqn:block-so-updated} introduces the additional parameter \(\lambda \in \mathbb{N}\). If only the even code property is leveraged, $\lambda=1$ as mentioned. If more constraints are leveraged as with Segmented GRAND or \ac{GSegGRAND}, $\lambda$ should increase accordingly. To understand its purpose, we recall how \ac{SOGRAND} \ac{SO} calculation works: the optimal \ac{SO} calculation normalizes each codeword's a posteriori probability by the codebooks probability mass \(\sum_{\hat{c}^n \in \mathcal{C}} P_{E^n | Y^n}\left(\hat{c}^n \oplus y_\text{hd}^n | y^n\right)\). Since list decoding only reveals the probability mass \(\plist\) of the codewords in the list, \ac{SOGRAND} assumes that the remaining \(2^k - |\mathcal{L}|\) codewords are uniformly distributed among the \(2^n - \guessingIdx\) unexplored noise effects, where \(\guessingIdx\) denotes the number of explored noise effects.  This corresponds to an approximate probability mass of \(2^{-(n-k)}\) times the unexplored noise effect probability \(1 - \pnoise\).
In contrast, a \ac{GRAND} variant that incorporates \(\lambda\) parity check constraints only guesses within a set of \(2^{n - \lambda}\) noise effects while the number of unexplored codewords stays constant, resulting in the factor \(2^{-(n-k-\lambda)}\).

When the even code structure is leveraged as in \cite{fengLeveragingCodeStructure2025}, \(\lambda=1\) and \(\psi\) is equal to the a posteriori probability of the noise effect parity, which can be directly calculated via \cite[Lemma~1]{gallagerLowdensityParitycheckCodes1962a}
\begin{align*}
    \psi 
    &= P_{E^n | Y^n}(\const_{\text{ORB},s_0} | y^n)
    = P_{\Phi(E^n) | Y^n}(s_0| y^n) \\
    &= \frac{1}{2} \Big(1 + (-1)^{s_0} \prod_{i=1}^n (1-2 P_{E_i | Y_i}(1 | y_i))\Big),
\end{align*}
where \(S_0 = \Phi(E^n)\) is the random variable of the global parity.
While \cite{fengLeveragingCodeStructure2025} showed that this modification improves the \ac{SO} quality, it is unclear whether the better \ac{SO} quality affects \ac{TPC} decoding.
In this paper, we apply the modification to \ac{TPC} and demonstrate that the modification is essential for \ac{GRAND} component decoding with constrained guessing as the the original \ac{SO} calculation can lead to high error floors for certain codes or performance loses of up to \SI{0.5}{\dB}.

To apply this concept to Segmented \ac{GRAND} and \ac{GSegGRAND}, we need to find an efficient way to calculate the probability mass \(\psi = P_{E^n | Y^n}(\const | y^n)\) of all noise effects \(\const\) over which the algorithms carry out guesswork. Segmented \ac{GRAND} and \ac{GSegGRAND} with standard tree structure restrict their guessing space by \(\nc\) linearly independent constraints, and the extended tree structure restricts the space by \(\nc+1\) constraints. Hence, we set \(\lambda = \nc\) or \(\lambda = \nc + 1\) accordingly.

The constraint space of Segmented GRAND is given by
\[
     \const_{\text{Seg},s^{n-k}} = \{z^n : H_{1:\nc, :} z^n = s^{n-k}\} = \bigcap_{j=1}^q \const_{\text{Seg},\segi,s_j},
\]
where
\(
    \const_{\text{Seg},\segi,s_j} = \{z^n: \Phi(z_\segi^{n_\segi}) = s_j\}
\)
is the constraint associated with segment \(j\).
Since Segmented \ac{GRAND} requires parity check equations with nonoverlapping \(1\)s for its constraint space, and the channel is memoryless, the individual constraints \(\const_{\text{Seg},\segi,s_j}\) are stochastically independent events \(\{E^n \in \const_{\text{Seg},\segi,s_j}\}\), resulting in a probability for \(\const\) of
\[
    \psi = P_{E^n | Y^n}(\const_{\text{Seg},s^{n-k}} | y^n)
    =
    \prod_{j=1}^\nc
    P_{S_j | Y^n}\left(s_j| y^n\right),
\]
where \(S^{n-k} = H E^n\) is the syndrome random variable.
The a posteriori probability of the segment's parity can be directly calculated via \cite[Lemma~1]{gallagerLowdensityParitycheckCodes1962a}
\begin{multline}\label{eqn:segmented-parity-prob}
    P_{S_j | Y^n}\left(s_j| y^n\right) 
    = 
    P_{\Phi\left(E^{n_\segi}_\segi\right) | Y^n}\left(s_j| y^n\right) \\
    = \frac{1}{2} \Big(1 + (-1)^{s_j} \prod_{i=1}^{n_\segi} (1-2 P_{E_{\segi,i} | Y_{\segi,i}}(1 | y_{\segi,i}))\Big),
\end{multline}
where \(e_{\segi,i}\) and \(y_{\segi,i}\) denote the \(i\)-th position of the \(j\)-th segment of the correct noise effect and received value.

\ac{GSegGRAND}'s constraints space can be expressed by a linear system of parity check constraints (see Sec.~\ref{sec:bal-tree-struc}): \(\const_\text{GSeg} = \{z^n: \eqref{eqn:system}\}\) and \(\const_\text{GSeg} = \{z^n: \eqref{eqn:system}, \eqref{eqn:system3}\}\) for the standard and extended tree structure, respectively, whose solutions are valid parity check configurations \(p^t\). Let
\[
    \const_{\text{GSeg},p^t} \coloneqq \left\{z^n: \forall j \in \{0, \dots, t-1\}, \Phi\left(z_\segi^{n_\segi}\right) = p_\segi 
    \right\}
\]
denote the set of the noise effects whose subeffect parities form the parity configuration \(p^t\). Then, \(\const_{\text{GSeg},p^t}\) form a partition of \ac{GSegGRAND} constraint space \(\const_\text{GSeg} = \bigcup_{p^t: \eqref{eqn:system}, [\eqref{eqn:system3}]} \const_{\text{GSeg},p^t}\), where the union is over all valid parity check configurations \(p^t\) satisfying \eqref{eqn:system} and, for the extended tree structure, also \eqref{eqn:system3}. 
By the same independence argument used for Segmented \ac{GRAND}, the a posteriori probability of \(\const_{\text{GSeg},p^t}\) factorizes as
\[
    P_{E^n | Y^n}(\const_{\text{GSeg},p^t} | y^n)
    =
    \prod_{j=0}^{t-1}
    P_{\Phi\left(E^{n_\segi}_\segi\right) | Y^n}\left(p_\segi | y^n\right),
\]
whose factors can each be calculated via \eqref{eqn:segmented-parity-prob} by replacing \(s_j\) with \(p_\segi\). 
Since the sets \(\const_{\text{GSeg},p^t}\) form a partition of \(\const_\text{GSeg}\), we obtain
\begin{equation}\label{eqn:sum-prob}
    P_{E^n | Y^n}(\const_\text{GSeg} | y^n) = \sum_{p^t: \eqref{eqn:system}, [\eqref{eqn:system3}]} P_{E^n | Y^n}(\const_{\text{GSeg},p^t} | y^n).
\end{equation}
where the sum is over \(p^t\) satisfying \eqref{eqn:system} and, for the extended tree structure, also \eqref{eqn:system3}. 

The valid parity check configurations in the sum of \eqref{eqn:sum-prob} can be efficiently iterated using
\texttt{UnitParityComp} and \texttt{ZeroParityComp} given in Alg.~\ref{alg:comp-unit-parity} and Alg.~\ref{alg:comp-zero-parity}, respectively, (see Sec.~\ref{sec:bal-tree-struc}) and we can reuse the parity iterator framework on which \ac{GSegGRAND} is built.
Note that the partition by parity check configurations enables the computation of \(P_{E^n | Y^n}(\const_\text{GSeg} | y^n)\) by a small number of summations, avoiding the infeasible evaluation of all terms in \(\const\). Specifically, for the extended tree structure with \(\nc=1\) and \(\nc=2\) we can list and implement the summation explicitly as \(\nc=1\) has only one summand
\(
    p^2 \in \{(s_0 \oplus s_1, s_1)\},
\)
and \(\nc=2\) has only two summands
\begin{multline*}
    p^4 \in \{
        (s_0 \oplus s_1 \oplus s_2, s_2, s_1, 0), \\
        (s_0 \oplus s_1 \oplus s_2 \oplus 1, s_2 \oplus 1, s_1 \oplus 1, 1)
    \}.
\end{multline*}

\section{Results}\label{sec:results}
\ac{ORBGRAND} and \ac{GSegGRAND} function with any channel that outputs \ac{LLR}. To evaluate their performance in this section, we consider a binary-input additive Gaussian noise channel (BI-AWGN) that maps codewords \(c^n \in \mathcal{C}\) to the channel output \(y^n\) with \(y_i = (-1)^{c_i} + \gaussnoise_i\) for \(i \in [n]\). \(\gaussnoise_i\) are independent normal distributed noise samples with variance \(\sigma^2 = (2 \lEsNO)^{-1}\), where \(\lEsNO = r \lEbNO\) and \(r\) is the code rate, resulting in the \acp{LLR}: \(\ell_i = 2 y_i / \sigma^2\).

\subsection{Balancing}
Theorem~\ref{thm_uniform_matrix} analyzed balancing of parity check matrices for \(n \to \infty\). In this section, we demonstrate how balancing can be applied to concrete codes with finite \(n\), resulting in the balanced tree structures used for the numerical results. To balance a concrete parity check matrix \(H\), \(H\) is multiplied with invertible binary matrices \(A\) sampled uniformly at random, and the columns are permuted into tree structure. A transformation \(A H\) is accepted if the discrepancy \(\max_j |n_\segi - n / 2^t|\) between the segment lengths \(n_\segi\) and the optimal balancing is below a certain threshold. To obtain an extended balanced tree structure for even codes, we first add an additional all-one row as first row, apply Gaussian elimination and remove the resulting last all-zero row. Then, balancing is performed on all rows excluding the first one as described above. Note that the balancing only needs to be applied once offline and does not contribute to the decoding complexity. We observed that the decoding performance does not degrade for small discrepancies, especially when the code is long. For even-weight Hamming codes (e.g., the (32, 26) eBCH code in this work), by definition all rows of the parity check matrix can be represented in balanced tree structure and balancing is not required.

\begin{table}[t]
\caption{Accuracy balanced tree transformation for \(4\) rows}
\label{tab:balance}
\centering
\begin{tabular}{l r r}
\hline
Code & \makecell{Absolute \\ Discrepancy} & \makecell{Relative \\ Discrepancy} \\
\hline
(256, 239) eBCH  & 2 & 6.3 \% \\
(128, 106) eBCH  & 0 & 0 \% \\
(32, 21) dRM  & 0 & 0 \% \\
(32, 26) eBCH  & 0 & 0 \% \\
(128, 110) CA-Polar code & 1 & 6.3 \% \\
\hline
\end{tabular}
\end{table}

In this work, we demonstrate decoding with up to 8 segments and therefore balance up to \(4\) rows in extended tree structure. Table~\ref{tab:balance} lists the results of the balancing for \(4\) rows, and lists the absolute discrepancy \(\max_j |n_\segi - n / 2^t|\) and relative discrepancy \(\max_j |n_\segi - n / 2^t| / (n / 2^t)\): we achieve balancing with a maximal absolute discrepancy of \(2\) for (256, 239) eBCH code, \(1\) for the (128, 110) CRC-Assisted Polar (CA-Polar) code, and perfect balancing for all other codes.
Figure~\ref{fig:balancing-example} visualizes this process on the first \(4\) parity check rows of the (128, 106) eBCH code (matrix \(H\)), where orange and blue rectangles represents \(1\) and \(0\)s, respectively: the tree structure without rebalancing (\(H_2\)) has a strong imbalance. However, after balancing, the first \(4\) rows are perfectly balanced with \(8\) segments of lengths \(16\).

\begin{figure}
    \includegraphics{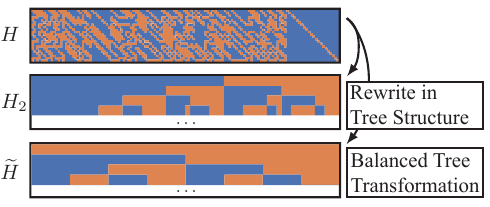}
    \caption{Example Balance Tree Transformation for the (128, 106) eBCH code}
    \label{fig:balancing-example}
\end{figure}

\subsection{Decoding of Component Codes}
We demonstrate the performance of \ac{GSegGRAND} on the (128, 106), and (256, 239) eBCH codes, which are visualized in Fig.~\ref{fig:ebch256-239}, and \ref{fig:ebch128-106}. In every legend entry, we denote the number of constraints that the decoder considers in brackets. Efficient decoding of the eBCH (256, 239) is particularly important as it serves as the component code of the OFEC code defined in the Open ROADM standard in optical communications~\cite{openroadmOpenROADMMSA2021}. \cite{rowshanJournalNew2025} originally tested Segmented GRAND on the (128, 106) eBCH demonstrating that the required guesswork for it can be reduced by up to a factor of \(4\) by incorporating \(2\) constraints. Following up on this, we demonstrate \ac{GSegGRAND}'s strength and incorporate up to \(4\) constraints.
\begin{figure}[tb]
    \centering
    \includegraphics{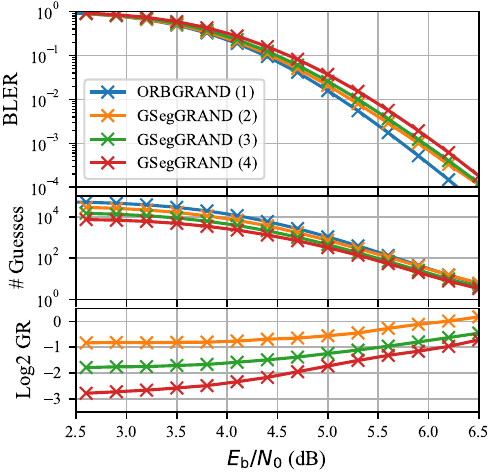}
    \caption{Decoding performance and guesswork for an (256, 239) eBCH with GSegGRAND vs. ORBGRAND decoding.}
    \label{fig:ebch256-239}
\end{figure}
\begin{figure}[tb]
    \centering
    \includegraphics{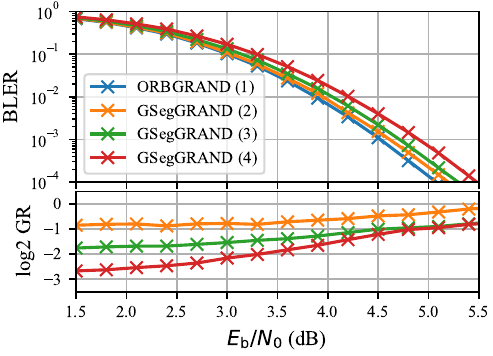}
    \caption{Decoding performance and guesswork for an (128, 106) eBCH with GSegGRAND vs. ORBGRAND decoding.}
    \label{fig:ebch128-106}
\end{figure}

To compare all decoders at their maximal performance, we use an abandonment threshold of \(2^{n-k}\). This threshold is the average number of guesses until an incorrect codeword is found, and further guessing does in general not improve performance~\cite{duffyCapacityAchievingGuessingRandom2019}. For \ac{GSegGRAND}, we use 2, 4, and 8 segments, which take, for even-weight codes, \(2\), \(3\), and \(4\) constraints into account, respectively. Our baseline is \ac{ORBGRAND} without segmentation, whose noise effect generator can incorporate \(1\) constraint if the code is even. We can therefore expect a maximal guesswork reduction of \(2^{-1}\), \(2^{-2}\), and \(2^{-3}\) compared to \ac{ORBGRAND}. 
The upper plot of each figure shows the \ac{BLER} vs \(\lEbNO\) and the middle plot in Fig.~\ref{fig:ebch256-239} shows the average guesswork in \(\log_{10}\)-scale. We define the average guesswork as the average number of noise effect generations carried out during decoding until a codeword is found or decoding is abandoned. For clarity, we calculate the \emph{log2 \ac{GR}}, which is the ratio between the decoder's and \ac{ORBGRAND}'s average guesswork in \(\log_2\)-scale and plot them in the lower plot of each figure.

\ac{GSegGRAND} experiences a slight performance loss at low \ac{SNR}, which is caused by the independent generation of subeffects through which the relative ordering between bits of different segments is lost. More specifically, the lost is caused by the difference between \eqref{eq_ORB_rel} and \eqref{eqn:rel-gen-seg-grand}. However, this loss is exchanged for a guesswork reduction of up to \(2^3=8\) at low \ac{SNR}, showing that \ac{GSegGRAND} utilizes the constraints fully.
This low \ac{SNR} regime is exactly turbo product decoding operates in early iterations. 
\begin{figure}
    \centering
    \includegraphics{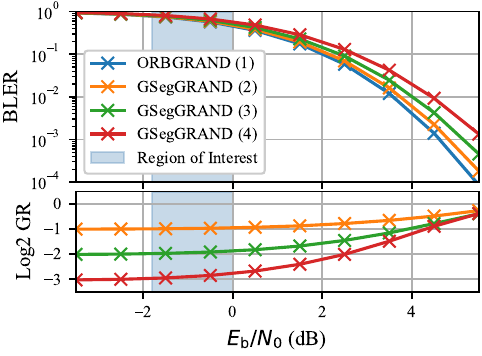}
    
    \caption{Decoding performance and guesswork for the (32, 21) dRM component code for corresponding \((32, 21)^2\) dRM \ac{TPC} (see Fig.~\ref{fig:tpc-dRM-32-21}) with GSegGRAND vs. ORBGRAND decoding. 
    The blue shaded region of interest marks the waterfall region of the corresponding \ac{TPC}. In the first \ac{TPC} decoding iteration, the component decoders operate within this \ac{SNR} range.}
    \label{fig:dRM32-21}
\end{figure}
Fig.~\ref{fig:dRM32-21} shows the \ac{BLER} and Log2 \ac{GR} of a (32, 21) \ac{dRM} code and marks the SNR range of the corresponding \ac{TPC}'s waterfall region%
\footnote{Concretely, we converted the \(\lEbNO\) range of the \ac{TPC}'s waterfall region from Fig.~\ref{fig:tpc-dRM-32-21} into a \(\lEsNO\) range and back into the corresponding \(\lEbNO\) range of the component code.}
(see Fig.~\ref{fig:tpc-ebch32_26} for decoding results of the corresponding \ac{TPC}).
In the \ac{TPC}'s waterfall region, \ac{GSegGRAND} achieves almost maximal guesswork reduction for the corresponding component code.

\subsection{Decoding of Turbo Product Codes}
We compare \ac{TPC} decoding with \ac{GSegGRAND} and ORBGRAND component code decoding. For all \acp{TPC}, we use the following decoding parameters as in the original \ac{SOGRAND} paper\cite{yuanSoftoutputGRANDLong2023}: damping factor \(\alpha = 0.5\) and list size \(N_{\mathcal{L}} = 4\). List decoding terminates when the estimated list error is below \(T = \num{e-5}\) or if \num{10000} guesses are reached. 
To specify a \ac{TPC}'s parameters and dimensions, we denote a \ac{TPC} based on an \((n,k)\) component code \(\mathcal{C}_\text{c}\) as ``\((n^2, k^2) = (n,k)^2 \; \mathcal{C}_\text{c}\) \ac{TPC}'' in the figures captions.

Fig.~\ref{fig:tpc-dRM-32-21} shows the decoding performance for a \((32, 21)^2\) \ac{dRM} \ac{TPC}, which outperformed its corresponding 5G LDPC code under \ac{SOGRAND} decoding with \ac{ORBGRAND}~\cite{yuanSoftoutputGRANDLong2023}. 
\begin{figure}
    \centering
    \includegraphics{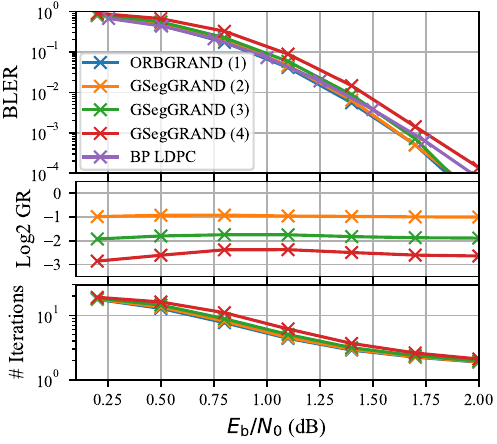}        
    \caption{Decoding performance and guesswork for the \((1024, 441) = (32, 21)^2\) dRM TPC. (The \ac{LDPC} \ac{BLER} curve is from~\cite{yuanSoftoutputGRANDLong2023}.)}
    \label{fig:tpc-dRM-32-21}
\end{figure}
The plot shows the \ac{TPC}'s \ac{BLER} performance, the Log2 \ac{GR}, and the average number of decoding iterations. The Log2 \ac{GR} is calculated from the total number of guesses, counting the guesses of all component decoders over all iterations until \ac{TPC} decoding terminates. The number of decoding iterations counts the number of full decoding iterations before termination.

Although (32, 21) dRM component decoding experienced small \ac{BLER} performance losses with \ac{GSegGRAND} in Fig.~\ref{fig:dRM32-21},  \ac{GSegGRAND} with 2 and 4 segments achieves \ac{ORBGRAND}'s \ac{TPC} decoding performance with almost the same average number of decoding iterations while reducing the guesswork by \(2\) and \(4\) across the \ac{SNR} range. The component decoder loss is most likely mitigated by the \ac{TPC} list decoding, which filters out small guessing order degradations. \Ac{GSegGRAND} with 8 segments reduces the guesswork further while experiencing a \ac{BLER} performance loss of less than \SI{0.2}{\dB}. In this case, each segment only contains 4 bit positions, resulting in a higher variance of the reliabilities per segment. \ac{TPC} decoding with \ac{GSegGRAND} therefore outperforms the corresponding \ac{LDPC} code under \ac{BP} decoding for 2 and 4 segments. To demonstrate generality, we also evaluated an  (1024, 676) \ac{TPC} with a (32,26) eBCH component code, for which we observe similar results, as shown in Fig.~\ref{fig:tpc-ebch32_26}.
\begin{figure}
    \centering
    \includegraphics{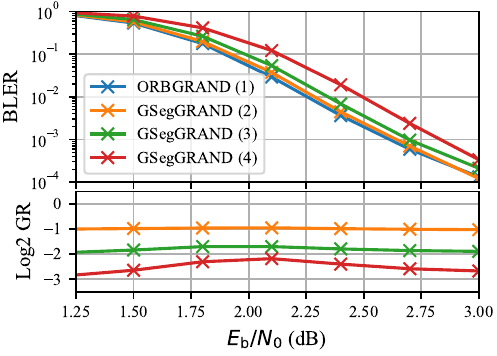}
    \caption{Decoding performance and guesswork for the \((1024, 676)=(32, 26)^2\) eBCH \ac{TPC}.}
    \label{fig:tpc-ebch32_26}
\end{figure}

\subsection{Soft-Output Calculation for Constrained Guessing}
To show the importance of the \ac{SO} calculation with conditioned noise effect probabilities \eqref{eqn:block-so-updated} in Sec.~\ref{sec:socalculation} for turbo product decoding, we compared its performance against the original \ac{SO} computation in \eqref{eqn:blockso} based on unconditional noise effect probabilities. We refer to the unconditioned and conditioned \ac{SO} computation as (SO) and (SO+), respectively.

We first compare how well the \ac{SO} of \ac{GSegGRAND} (SO) and (SO+) matches \ac{ORBGRAND}'s \ac{SO} without constrained guessing, which is our baseline in this section. To do so, we transmit \(10000\) (32, 26) eBCH codewords over a BI-AWGN channel with \(\lEbNO = \SI{3}{\dB}\)%
\footnote{Turbo product decoding with \ac{ORBGRAND} achieves a target \ac{BLER} of \num{e-4} at this operational point.}
and \ac{SISO}-decode the received signals with \ac{ORBGRAND} and \ac{GSegGRAND} yielding an \ac{LLR} pair \((\ell_{\text{ORB},i}, \ell_{\text{GSeg},i})\) for each bit \(i\). Figure~\ref{fig:so-scatter} shows the pairs as a scatter plot with (SO) (left) and (SO+) (right) computation.
\begin{figure}[tbp]
    \centering
    \includegraphics{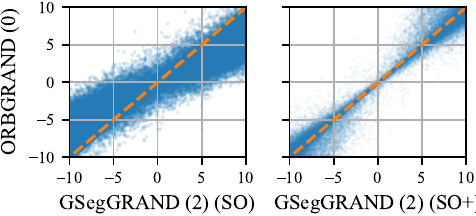}
    \caption{Scatter plot of LLRs with improved SO calculation (SO+) and default SO calculation (SO) for (32, 26) eBCH component code decoding at \(\lEbNO = 3\) dB.}
    \label{fig:so-scatter}
\end{figure}
\ac{GSegGRAND}'s \ac{SO} closely match with \ac{ORBGRAND}'s, i.e., \(\ell_{\text{ORB},i} \approx \ell_{\text{GSeg},i}\), when the scatter points lie on the \(x=y\) diagonal (orange dashed line). There is a significant discrepancy between \ac{GSegGRAND} with the unconditioned \ac{SO} calculation and \ac{ORBGRAND}: when \ac{ORBGRAND} outputs \acp{LLR} with magnitude smaller \(5\), the unconditioned \ac{SO} even frequently produces \acp{LLR} with opposite sign.
On the other side, \ac{GSegGRAND} (SO+) almost always produces \acp{LLR} of the same sign as \ac{ORBGRAND} and matches \ac{ORBGRAND}'s \acp{LLR} closely at low \acp{LLR}.

Figure~\ref{fig:tpc-const-guessing} shows how the (SO) and (SO+) computation affect the turbo product decoding performance for the \((32, 21)^2\) dRM \ac{TPC} and \((32, 26)^2\) eBCH \ac{TPC}.
\begin{figure}[tbp]
    \centering
    \includegraphics{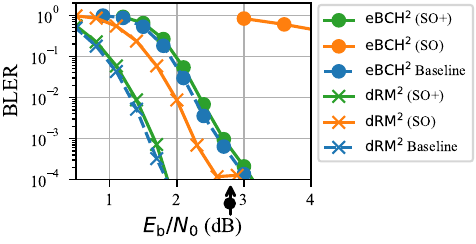}
    \caption{Comparison of TPC decoding with and without improved SO for GSegGRAND (3) component code decoding ((SO+) and (SO), respectively). Legend entry \(\text{eBCH}^2\) refers to the \((32, 26)^2\) eBCH \ac{TPC} and legend entry \(\text{dRM}^2\) to the \((32, 21)^2\) dRM \ac{TPC}.
    With improved SO, TPC decoding performs close to the baseline (ORBGRAND (0) component code decoding without constrained guessing) while the original SO is insufficient to achieve good performance. (Figure~\ref{fig:tpc-const-guessing-var-const} further evaluates the BLER performance for varying number of constraints at the marked \(\lEbNO\) points.)}
    \label{fig:tpc-const-guessing}
\end{figure}
With the proposed \ac{SO} calculation (SO+), turbo product decoding with GSegGRAND (3) performs close to the baseline ORBGRAND (0) (dashed blue). (SO+) gains up to \SI{0.7}{\dB} over (SO) for the \((32, 21)^2\) dRM \ac{TPC} and enables efficient decoding for the \((32, 26)^2\) eBCH \ac{TPC}, which otherwise experiences a high error floor with (SO) and constrained guessing. 
This behavior is consistent for constrained guessing across the number of constraints where Fig.~\ref{fig:tpc-const-guessing-var-const} shows the BLER performance of the \((32, 26)^2\) eBCH \ac{TPC} in the waterfall region (black vertical dashed line in Fig.~\ref{fig:tpc-const-guessing}) for a varying number of constraints. In this plot, 2-4 constraints correspond \ac{GSegGRAND} and one constraint corresponds to \ac{ORBGRAND} (1), which skips noise effects of incorrect parity for even codes (see Sec.~\ref{sec:socalculation}).
Note that not only \ac{GSegGRAND} benefits from (SO+) but also \ac{ORBGRAND} (1) if it skips noise effects of incorrect parity for even codes (see Sec.~\ref{sec:socalculation}), requires (SO+).
\begin{figure}[tbp]
    \centering
    \includegraphics{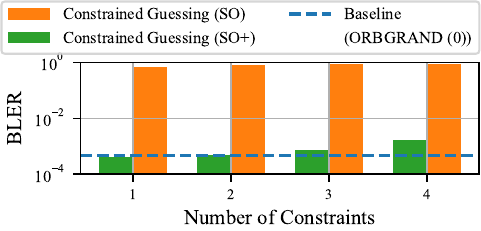}
    \caption{Comparison of decoding for the \((1024, 676)=(32, 26)^2\) eBCH \ac{TPC} with and without improved SO calculation using constrained component code guessing with varying numbers of constraints. BLER is evaluated at a single point in the waterfall region. One constraint corresponds to ORBGRAND (1), while 2-4 constraints correspond to GSeGRAND. Regardless of the number of constraints, (SO) degrades in performance while (SO+) achieves performance close to baseline.}
    \label{fig:tpc-const-guessing-var-const}
    \vspace{-0.2cm}
\end{figure}

\section{Conclusion}
In this paper, we presented \ac{GSegGRAND}, a novel generalization of Segmented \ac{GRAND} that efficiently incorporates multiple parity check constraints during guessing. The underlying parity check matrix structure is designed such that arbitrary linear codes can be systematically transformed into it, facilitating the extraction of multiple constraints. For iterative \acp{TPC} decoding, we demonstrate that the unmodified \ac{SOGRAND} \ac{SO} computation becomes inaccurate under constrained guessing, and derive a novel, accurate \ac{SO} formula for \ac{GSegGRAND} and Segmented GRAND. By explicitly incorporating the constraints in the \ac{SO} computation, \ac{GSegGRAND} achieves guesswork reduction of up to \(88 \%\) under turbo product decoding while maintaining near \ac{ORBGRAND} \ac{BLER} performance.
While we focus here on the practical case of soft-input decoding, the balanced tree transformation provides a general framework for GRAND guesswork reduction that is applicable to various applications, such as hard-decision decoding. Since hard-decision decoding has no discrepancy between \eqref{eq_ORB_rel} and \eqref{eqn:rel-gen-seg-grand}, \ac{GSegGRAND} can reduce guesswork without sacrificing BLER.
Our evaluation demonstrates that \ac{GSegGRAND} is a promising candidate for low-latency and high-performance decoding in future communication systems, offering a systematic path to \ac{GRAND} guesswork reduction across a broad class of linear codes.
\vspace{-0.1cm}

% Generated by IEEEtran.bst, version: 1.14 (2015/08/26)

\end{document}